\newtheorem{thm}{\textbf{Theorem}}
\newtheorem{lem}{\textbf{Lemma}}
\newtheorem{prop}{\textbf{Proposition}}
\newtheorem{cor}{\textbf{Corollary}}
\newtheorem{defn}{\textbf{Definition}}
\newtheorem{rem}{\textbf{Remark}}
\newtheorem{assum}{\textbf{Assumption}}
\newcommand{\bm}[1]{\boldsymbol{#1}}
\DeclareMathOperator{\diag}{diag}
\DeclareMathOperator{\sgn}{sgn}
\title{\LARGE \bf
Learning High-Order Control Barrier Functions\\ for Safety-Critical Control with Gaussian Processes
}
\author{Mohammad Aali and Jun Liu% <-this % stops a space
% \thanks{This work was supported in part by Nutrien Ltd. and the NSERC of Canada.}% <-this % stops a space
\thanks{This work was supported in part by Nutrien Ltd., NSERC of Canada, and the Canada Research Chairs program.}
\thanks{Mohammad Aali is with the Department of Applied Mathematics, University of Waterloo, Canada,
        {\tt\small mohammad.aali@uwaterloo.ca}}%
\thanks{Jun Liu is with the Department of Applied Mathematics, University of Waterloo, Canada,
        {\tt\small j.liu@uwaterloo.ca}}%
}
\begin{document}

\maketitle
\thispagestyle{empty}
\pagestyle{empty}

%%%%%%%%%%%%%%%%%%%%%%%%%%%%%%%%%%%%%%%%%%%%%%%%%%%%%%%%%%%%%%%%%%%%%%%%%%%%%%%%
\begin{abstract}
Control barrier functions (CBFs) have recently introduced a systematic tool to ensure system safety by establishing set invariance. When combined with a nominal control strategy, they form a safety-critical control mechanism. However, the effectiveness of CBFs is closely tied to the system model. In practice, model uncertainty can compromise safety guarantees and may lead to conservative safety constraints, or conversely, allow the system to operate in unsafe regions. In this paper, we use Gaussian processes to mitigate the adverse effects of uncertainty on high-order CBFs (HOCBFs). A particular structure of the covariance function enables us to convert the chance constraints of HOCBFs into a second-order cone constraint, which results in a convex constrained optimization as a safety filter. We analyze the feasibility of the resulting optimization and provide the necessary and sufficient conditions for feasibility. The effectiveness of the proposed strategy is validated through two numerical results.
\end{abstract}
%%%%%%%%%%%%%%%%%%%%%%%%%%%%%%%%%%%%%%%%%%%%%%%%%%%%%%%%%%%%%%%%%%%%%%%%%%%%%%%%
\section{INTRODUCTION} \label{sec1}

In recent years, the requirements of modern control applications have extended beyond mere stability assurance to include safety verification. The notion of safety holds such major importance that we deploy an autonomous system in the real world only if it is verified to be safe or achieves a high level of probabilistic safety. CBFs have recently provided a powerful theoretical tool for synthesizing controllers that ensure the safety of dynamical systems. They were initially applied to autonomous driving and bipedal walking \cite{ames2019control}. While this method has shown significant promise in addressing safety verification challenges, it had an initial limitation: it was only suitable for systems of relative degree one with respect to the CBF. The introduction of exponential CBFs \cite{nguyen2016exponential} eliminated this limitation and expanded the application of this approach to systems of arbitrary relative degree. Consequently, the use of this safety method has extended to a wide range of robotic systems. A more general definition of HOCBFs is proposed in \cite{xiao2021high}. More recently, the assumptions of forward completeness and uniform relative degree have been relaxed for HOCBFs \cite{tan2021high}.

However, the CBFs method initially advanced on the theoretical assumption that a precise dynamical model of the system is available. In practice, model uncertainty compromises the safety guarantees provided by CBFs. Data-driven methods offer powerful tools to address these challenges with adaptive or probabilistic solutions. Among machine learning techniques, Gaussian processes (GPs) have gained significant attention in control systems. Beyond prediction, GPs provide probabilistic confidence intervals, which are invaluable in data-driven safety-critical control methods where uncertainty quantification is crucial.

Data-driven methods have been extensively used to ensure safety in the presence of model uncertainty with CBFs of relative degree one. In \cite{wang2018safe}, a GP is employed to learn the residual control-independent dynamics for the safe navigation of quadrotors with CBFs. A GP-based systematic controller synthesis method is introduced in \cite{jagtap2020control}, offering an inherently safe control strategy. In \cite{fan2020bayesian}, the authors employ a Bayesian approach to learn model uncertainty while ensuring safety during the learning process. All the aforementioned works assume that the actuation term is known.

Data-driven methods for HOCBFs can be divided into two approaches. The first approach learns the unknown (residual) dynamics and uses the approximated model to derive safety certificates. The impact of the residual dynamics on ECBFs is addressed in \cite{wang2021learning}, where system dynamics are learned using a neural network. Methods employing neural networks must achieve precise approximations of the dynamics to formally ensure safety. In general, this requirement is challenging to meet, but GPs can provide a solution by offering confidence bounds. In \cite{khojasteh2020probabilistic}, completely unknown control-affine system dynamics are approximated with matrix variate GP (MVGP), resulting in a probabilistic high-order safety constraint. This method ensures that the expectation and variance of the resulting probabilistic safety constraint remain linear and quadratic in the control input, leading to convex optimization. However, approximating the entire system using GPs is computationally expensive. A prevalent assumption, therefore, is to learn each element of the vector fields individually and overlook their correlations.

These limitations lead us to the second approach, which approximates the projection of the residual of system dynamics onto the CBF. In \cite{taylor2020learning}, an episodic learning method is employed to approximate the impact of unmodeled dynamics on system safety, eliminating the need for exhaustive data collection. In \cite{castaneda2021gaussian}, a GP-based min-norm controller stabilizing an unknown control affine system using control Lyapunov functions is introduced. The effect of the uncertainty in the stability constraint is approximated by a GP model. The same method is applied to a relative degree one CBF design in \cite{castaneda2021pointwise}. This technique results in the approximation of a real-valued function representing the influence of uncertainty on the safety certificate.

In this paper, we present a data-driven approach based on GP that learns the impact of model uncertainty on high-order safety certificates. We then incorporate the uncertainty-aware high-order safety constraint with an arbitrary robust controller in a min-norm quadratic program. The main contributions of this paper relative to prior works are as follows:
\begin{itemize}
\item We characterize the impact of uncertainty on the HOCBF. It is proven that with a proper design of HOCBFs, the residual term in the high-order safety certificate is control affine, which facilitates obtaining the GP-based high-order safety constraint.
\item We demonstrate that by selecting a particular form of the covariance function in the GP structure, the resulting min-norm QP with an uncertainty-aware safety constraint can be converted into a second-order cone program (SOCP).
\item We analyze the feasibility of the resulting constrained optimization problem and determine the necessary and sufficient conditions for feasibility.
\end{itemize}
% A conference version of this paper will appear in \cite{aali2024}. This extended version includes additional details omitted in the conference paper.
\section{BACKGROUND AND PRELIMINARIES}

Consider the nonlinear control affine system
\begin{equation}
    \dot {\boldsymbol{x}} = f(\boldsymbol{x})+ g(\boldsymbol{x}) \boldsymbol{u},
    \label{eq1}
\end{equation}where $\boldsymbol{x} \in \mathbb{R}^n$is the state, $\boldsymbol{u} \in \mathbb{R}^m$  is the control input, and the vector fields $f:\mathbb{R}^n \rightarrow \mathbb{R}^n$ and $g:\mathbb{R}^n \rightarrow \mathbb{R}^{n \times m}$, are locally Lipschitz in $\boldsymbol{x}$. Given a locally Lipschitz continuous state feedback control $\pi:\mathbb{R}^n \rightarrow \mathbb{R}^{m}$, the closed-loop system is:
\begin{equation}
    \dot {\boldsymbol{x}} = F_{cl}(\boldsymbol{x}) \triangleq  f(\boldsymbol{x})+ g(\boldsymbol{x}) \pi(\boldsymbol{x}),
    \label{eq2}
\end{equation}
 which is locally Lipschitz continuous. Then, for any initial state $\boldsymbol{x}(0) = \boldsymbol{x}_0$, the system (\ref{eq2}) has a unique solution $\boldsymbol{x}(t)$ defined on a maximal interval of existence $I(\boldsymbol{x}_0) = [0, I_{max})$. In this paper, we consider forward complete closed-loop systems, thus $I_{max} = \infty$. A set $\mathcal{C} \subset \mathbb{R}^n$ is said to be forward invariant for (\ref{eq2}), if for all $x(0) \in \mathcal{C}$, the solution $\boldsymbol{x}(t) \in \mathcal{C}$ for all $t \in I(\boldsymbol{x}_0)$.

\begin{defn}[Class $\mathcal{K}$ function \cite{khalil2002nonlinear}]\label{def0}
    We say that a continuous function $\alpha:[0, a) \rightarrow [0, \infty), a>0$, belongs to class $\mathcal{K}$, if it is strictly increasing and $\alpha(0)=0$.
\end{defn}

CBF method defines safety based on the notion of set invariance, where a subset of the state space is specified as the safe set. This set is characterized by the zero-superlevel set of a continuously differentiable function $h:\mathbb{R}^n \rightarrow \mathbb{R}$ as
\begin{align}
    \mathcal{C} &= \{\boldsymbol{x} \in \mathbb{R}^n \mid h(\boldsymbol{x})\geq 0\},\nonumber\\
    \partial \mathcal{C} &= \{\boldsymbol{x} \in \mathbb{R}^n \mid h(\boldsymbol{x})= 0\}.\label{eq3}
\end{align}

CBFs provide a constructive tool for achieving forward invariance of set $\mathcal{C}$. We define a CBF as follows:
\begin{defn}[Control barrier function \cite{ames2019control}]\label{def2}
     Given the set $\mathcal{C}$ as defined in (\ref{eq3}), the continuously differentiable function $h(\boldsymbol{x})$ is called a control barrier function on a domain $\mathcal{D}$ with $\mathcal{C} \subset \mathcal{D} \subset \mathbb{R}^n$, if there exists a class $\mathcal{K}$ function
    $\alpha$ such that 
\begin{equation}
    \sup_{\boldsymbol{u} \in \mathbb{R}^m} [L_f h(\boldsymbol{x})+ L_g h(\boldsymbol{x})\boldsymbol{u}] +\alpha(h(\boldsymbol{x}))  \geq 0,
    \label{eq4}
\end{equation}for all $\boldsymbol{x} \in \mathcal{D}$, where $L_f{h(\boldsymbol{x})}$ and $L_g h(\boldsymbol{x})$ are the Lie derivatives of $h(\boldsymbol{x})$ with respect to the vector fields $f$ and $g$.
\end{defn}
We can derive the following corollary based on Nagumo's theorem \cite{blanchini2008set}, for the forward invariance of the safe set $\mathcal{C}$.
\begin{cor}\label{cor1}
     Given CBF $h:\mathbb{R}^n \rightarrow \mathbb{R}$ with the associated set $\mathcal{C}$ in (\ref{eq3}), if $\nabla h(\bm x) \neq 0$ for all $\bm x \in \partial \mathcal{C}$, any Lipschitz continuous controller $\boldsymbol{u}(\boldsymbol{x})$ satisfying (\ref{eq4}) guarantees that $\mathcal{C}$ is forward invariant for the system (\ref{eq1}) and thus safe.
\end{cor}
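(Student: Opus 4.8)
The plan is to invoke Nagumo's theorem directly, as the corollary suggests. Its hypotheses are in force here: the closed-loop field $F_{cl}(\bm{x}) = f(\bm{x}) + g(\bm{x})\bm{u}(\bm{x})$ is locally Lipschitz (so solutions are unique), and the closed-loop system is assumed forward complete, so the maximal interval of existence is $[0,\infty)$. Nagumo's theorem then states that the closed set $\mathcal{C}$ is forward invariant for $\dot{\bm{x}} = F_{cl}(\bm{x})$ if and only if $F_{cl}(\bm{x})$ lies in the tangent (contingent) cone $T_{\mathcal{C}}(\bm{x})$ for every boundary point $\bm{x} \in \partial\mathcal{C}$; interior points require no separate attention, since by continuity a solution can leave $\mathcal{C}$ only through $\partial\mathcal{C}$. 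Thus the entire proof reduces to verifying the subtangentiality condition on $\partial\mathcal{C}$.

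The first substantive step is to compute the tangent cone from the sublevel-set description $\mathcal{C} = \{\bm{x} : h(\bm{x}) \geq 0\}$. Because the hypothesis $\nabla h(\bm{x}) \neq 0$ holds on $\partial\mathcal{C} = \{\bm{x} : h(\bm{x}) = 0\}$, the single smooth inequality defining $\mathcal{C}$ satisfies the standard constraint qualification, which yields the explicit half-space characterization $T_{\mathcal{C}}(\bm{x}) = \{\bm{v} \in \mathbb{R}^n : \nabla h(\bm{x})^{\top}\bm{v} \geq 0\}$ for each $\bm{x} \in \partial\mathcal{C}$. This translation from the sublevel-set form of $\mathcal{C}$ to the half-space form of its tangent cone is the technical heart of the argument, and it is exactly where the nonvanishing-gradient hypothesis is needed: without it the contingent cone may be strictly smaller than the half-space, and the desired implication can fail. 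I expect this to be the main obstacle to write carefully (or the appropriate citation to point to).

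Next I would evaluate the subtangentiality condition along the closed-loop dynamics. For any $\bm{x} \in \partial\mathcal{C}$, $\nabla h(\bm{x})^{\top} F_{cl}(\bm{x}) = L_f h(\bm{x}) + L_g h(\bm{x})\bm{u}(\bm{x})$. Since $\bm{u}(\bm{x})$ satisfies (\ref{eq4}) pointwise, this equals at least $-\alpha(h(\bm{x}))$; and because $h(\bm{x}) = 0$ on $\partial\mathcal{C}$ while $\alpha$ is class $\mathcal{K}$ with $\alpha(0) = 0$, we get $\nabla h(\bm{x})^{\top} F_{cl}(\bm{x}) \geq 0$, i.e. $F_{cl}(\bm{x}) \in T_{\mathcal{C}}(\bm{x})$. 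Applying Nagumo's theorem with this verified condition gives that $\mathcal{C}$ is forward invariant for the closed-loop system, hence for (\ref{eq1}) under the controller $\bm{u}$, and safety follows from the definition of $\mathcal{C}$ as the safe set. Everything beyond the tangent-cone characterization is routine bookkeeping.
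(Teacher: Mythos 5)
Your proof is correct and follows exactly the route the paper intends: the paper states this corollary as a direct consequence of Nagumo's theorem (citing Blanchini) without writing out a proof, and your argument---Lipschitz closed loop, tangent-cone characterization $T_{\mathcal{C}}(\bm x)=\{\bm v:\nabla h(\bm x)^{\top}\bm v\geq 0\}$ enabled by $\nabla h\neq 0$ on $\partial\mathcal{C}$, and subtangentiality from (\ref{eq4}) with $\alpha(h(\bm x))=\alpha(0)=0$ on the boundary---is precisely the fleshed-out version of that argument. The only cosmetic slip is calling $\mathcal{C}=\{\bm x: h(\bm x)\geq 0\}$ a sublevel set; it is the zero-superlevel set, and your reasoning uses it correctly.
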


\subsection{High-order control barrier functions}\label{1subsec1}

In the context of high-order control barrier functions, first, we introduce relative degree:
\begin{defn}[Relative degree]\label{de3}
     A continuously differentiable function $h$ is said to have relative degree $r$ on a given domain $\mathcal{D}$ with respect to system (\ref{eq1}) if:
    \begin{enumerate}[(i)]
        \item $L_g L_f^k h(\boldsymbol{x}) = 0$ for all $k < r-1$,
        \item $L_g L_f^{r-1} h(\boldsymbol{x}) \neq 0$,
    \end{enumerate}
   hold for all $\bm x \in \mathcal{D}$.
\end{defn}
While Definition \ref{def2} is only applicable to CBFs with a relative degree of one, many applications often involve CBFs with higher relative degrees. To accommodate such scenarios, an extended definition known as HOCBFs has been developed in \cite{xiao2021high}.
In this approach, we define a series of continuously differentiable functions $\zeta:\mathbb{R}^n \rightarrow \mathbb{R}$ as
\begin{align}
\zeta_0(\boldsymbol{x}) ={}& h(\boldsymbol{x}),\nonumber\\
\zeta_{i}(\boldsymbol{x}) ={}& \dot \zeta_{i-1}(\boldsymbol{x}) + \alpha_i(\zeta_{i-1}(\boldsymbol{x})),\; i \in \{1, \dots, r \},
\label{eq5}
\end{align}
where $\alpha_i(\cdot),\, i \in \{1, \dots, r \}$ are class $\mathcal{K}$ functions. We also define their zero-superlevel sets $\mathcal{Z}_i$ and their interior sets for $i \in \{1, \dots, r \}$ as 
\begin{align}
\mathcal{Z}_i ={}& \{\boldsymbol{x} \in \mathbb{R}^n \mid \zeta_{i-1}(\boldsymbol{x}) \geq 0\},\label{eq6}\\
Int(\mathcal{Z}_i) ={}& \{\boldsymbol{x} \in \mathbb{R}^n \mid \zeta_{i-1}(\boldsymbol{x}) > 0\}.
\label{eq7}
\end{align}

\begin{defn}\label{def3}
Let the functions $\zeta_i(\boldsymbol{x})$ and sets $Int(\mathcal{Z}_i)$ be defined by (\ref{eq5}) and (\ref{eq7}), respectively. The $r^{th}$ order continuously differentiable function $h(\boldsymbol{x})$ with relative degree $r>1$ is called a HOCBF if $h$ and its derivatives up to order $r$, are locally Lipschitz continuous, and there exists a set of sufficiently smooth class $\mathcal{K}$ functions $\alpha_i$, such that 

\begin{align}
\sup_{u \in \mathbb{R}^m}[L_f^r h+ L_gL_f^{r- 1} h \hspace{2pt} \bm u &+ \sum _{i = 1}^{r - 1}L_f^i(\alpha _{r - i}\circ \zeta _{r-i-1})\nonumber\\
&+\alpha_r(\zeta_{r-1})] \geq 0,
\label{eq8}
\end{align}
for all $\bm x \in \bigcap^{r}_{i=1} Int(\mathcal{Z}_i)$, where the dependence on $\bm x$ is removed for simplicity.
\end{defn}
Note that (\ref{eq8}) is obtained by substituting the expression $\zeta_0(\bm x) = h(\bm x)$ into the recursive formula (\ref{eq5}) $r$ times. Therefore, it is equivalent to $\zeta_r(\bm x, \bm u) \geq 0$, which will be used in the remainder of the paper.
\begin{prop}\label{prop1}
Given system (\ref{eq1}) with a HOCBF $h$ with relative degree $r$, any Lipschitz continuous control $\boldsymbol{u}$ that satisfies (\ref{eq8}) renders the set $\bigcap^{r}_{i=1} Int(\mathcal{Z}_i)$ forward invariant for system (\ref{eq1}).
\end{prop}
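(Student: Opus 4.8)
The plan is to build on the observation recorded just before the statement: that a particular control $\bm u$ satisfying (\ref{eq8}) is exactly one for which $\zeta_r(\bm x,\bm u)\geq 0$, and that by the recursion (\ref{eq5}) with $i=r$ this reads $\dot\zeta_{r-1}\geq-\alpha_r(\zeta_{r-1})$ along closed-loop trajectories. From here the argument is a downward induction on the index of $\zeta_j$, applying the comparison lemma \cite{khalil2002nonlinear} at each level. Fix $\bm x_0\in\bigcap_{i=1}^r Int(\mathcal{Z}_i)$, let $\bm x(t)$ be the closed-loop solution on $I(\bm x_0)$, and note that since $h$ and its derivatives up to order $r$ are locally Lipschitz and $\pi$ is Lipschitz, each map $t\mapsto\zeta_j(\bm x(t))$ with $j\leq r-1$ is $C^1$; the chain rule identifies $\tfrac{d}{dt}\zeta_{r-1}(\bm x(t))$ with $L_f^r h+L_gL_f^{r-1}h\,\bm u+\sum_{i=1}^{r-1}L_f^i(\alpha_{r-i}\circ\zeta_{r-i-1})$ evaluated along $\bm x(t)$, which is $\zeta_r(\bm x(t),\bm u(\bm x(t)))-\alpha_r(\zeta_{r-1}(\bm x(t)))$.

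For the base step, $\tfrac{d}{dt}\zeta_{r-1}(\bm x(t))\geq-\alpha_r(\zeta_{r-1}(\bm x(t)))$ on the trajectory. Let $\phi_{r-1}$ solve $\dot\phi=-\alpha_r(\phi)$ with $\phi(0)=\zeta_{r-1}(\bm x_0)>0$; because $\alpha_r$ is a sufficiently smooth (hence locally Lipschitz) class $\mathcal{K}$ function, $\phi_{r-1}$ is defined for all $t\geq 0$, is non-increasing, and is strictly positive, since a solution starting above the equilibrium $0$ cannot reach it in finite time. The comparison lemma then gives $\zeta_{r-1}(\bm x(t))\geq\phi_{r-1}(t)>0$. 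For the inductive step, suppose $\zeta_j(\bm x(t))>0$ on $I(\bm x_0)$ for some $j\in\{1,\dots,r-1\}$; by (\ref{eq5}), $\tfrac{d}{dt}\zeta_{j-1}(\bm x(t))=\zeta_j(\bm x(t))-\alpha_j(\zeta_{j-1}(\bm x(t)))\geq-\alpha_j(\zeta_{j-1}(\bm x(t)))$, and $\zeta_{j-1}(\bm x_0)>0$ because $\bm x_0\in Int(\mathcal{Z}_j)$, so the same comparison argument yields $\zeta_{j-1}(\bm x(t))\geq\phi_{j-1}(t)>0$. Iterating from $j=r-1$ down to $j=1$ shows $\zeta_j(\bm x(t))>0$ for every $j\in\{0,\dots,r-1\}$ and every $t$, which is exactly $\bm x(t)\in\bigcap_{i=1}^r Int(\mathcal{Z}_i)$; by forward completeness this holds on $[0,\infty)$.

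The point that needs care, and which I expect to be the only real obstacle, is the apparent circularity: (\ref{eq8}), and hence $\zeta_r\geq 0$, is only assumed on $\bigcap_{i=1}^r Int(\mathcal{Z}_i)$, which is precisely the set we want to show invariant, so the induction above cannot simply be run on all of $I(\bm x_0)$ at once. I would close this with the standard maximal-interval argument: set $T=\sup\{\tau\in I(\bm x_0):\bm x(t)\in\bigcap_{i=1}^r Int(\mathcal{Z}_i)\ \text{for all}\ t\in[0,\tau]\}$, which is positive because the set is open and $\bm x$ is continuous; run the induction on $[0,T)$, where (\ref{eq8}) is available, to obtain the strict lower bounds $\zeta_j(\bm x(t))\geq\phi_j(t)$ for $j\in\{0,\dots,r-1\}$; then let $t\to T^-$ to get $\zeta_j(\bm x(T))\geq\phi_j(T)>0$, so $\bm x(T)$ still lies in the open set and, by continuity, $T$ could be strictly enlarged unless $T$ is already the right endpoint of $I(\bm x_0)$ — forcing $T=I_{max}=\infty$. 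The remaining checks are routine: the chain-rule computation mentioned above, and the remark that the possibly bounded domains $[0,a)$ of the $\alpha_j$ cause no difficulty since each $\phi_j$ stays in $[0,\zeta_{j-1}(\bm x_0)]$.
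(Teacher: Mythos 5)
Your argument is correct: the downward induction with the comparison lemma, together with the maximal-interval argument to break the apparent circularity of assuming (\ref{eq8}) only on $\bigcap_{i=1}^{r}Int(\mathcal{Z}_i)$, is exactly the standard proof of this proposition. The paper itself does not prove it inline but defers to \cite{aali2022multiple}, where the argument is essentially the same comparison-lemma cascade you give, so there is nothing further to fix.
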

A proof of this result can be found in \cite{aali2022multiple}. 

Inequality (\ref{eq8}) imposes an affine condition on the control values which can be used to ensure safety. Consider a nominal control policy $\bm u_{nom}$, Lipschitz continuous in $\bm x$, designed primarily to fulfill control objectives. Our goal is to apply this control law to the system only if it complies with (\ref{eq8}). In practice, this is accomplished by solving the following quadratic programming optimization:
\begin{align}
    \boldsymbol{u}_{s} ={} &\underset{\boldsymbol{u} \in \mathcal{R}^m}{\arg\min} \hspace{4pt} {\| \bm u - \bm u_{nom} \|^2_2} \nonumber\\
    &\textrm{s.t.} \quad \zeta_r(\bm x, \bm u) \geq 0.
    \label{eq9}
\end{align}
The resulting minimally invasive point-wise controller prioritizes the safety over control objectives by satisfying $\zeta_r(\bm x, \bm u) \geq 0$ as a hard constraint.

\section{Impact of model uncertainty on hocbfs}\label{sec3}
In this section, we focus on reformulating (\ref{eq8}) such that it takes the model uncertainty into account. We refer to the system (\ref{eq1}) as the true model, which is partially unknown, and consider a nominal model \begin{equation}
    \dot {\boldsymbol{x}} = \hat f(\boldsymbol{x})+ \hat g(\boldsymbol{x}) \boldsymbol{u},
    \label{eq10}
\end{equation}
where $\hat f:\mathbb{R}^n \rightarrow \mathbb{R}^n$ and $\hat g:\mathbb{R}^n \rightarrow \mathbb{R}^{n \times m}$ are known functions, locally Lipschitz in $\boldsymbol{x}$. We design a HOCBF based on the known nominal model (\ref{eq10}). To apply it to the true system, we make the following assumption.
\begin{assum}\label{assum1}
    If an $r^{th}$ order differentiable function $h:\mathbb{R}^n \rightarrow \mathbb{R}$ is an HOCBF for the nominal model (\ref{eq10}), it will be also a valid HOCBF for the true system (\ref{eq1}).
\end{assum}
Intuitively, this assumption requires the true and nominal model to share the same degree of actuation, which is met by feedback linearizable systems \cite{khalil2002nonlinear}.

Due to the inherent model mismatch between the true system and the nominal model, the Lie derivatives of $h(\bm x)$ (of order $r$) calculated based on the nominal model differ from their true values. We represent the resulting error by
\begin{align}
    \Delta_{i}(\bm x) &=  L^i_{f} h(\bm x) - L^i_{\hat f} h(\bm x), i \in \{1, \dots, r \},\label{eq11}\\
    \Delta_g(\bm x) &= L_gL_f^{r-1} h(\bm x) - L_{\hat g}L^{r-1}_{\hat f} h(\bm x).\label{eq12}
\end{align}
Similarly, the errors in the time derivatives of $h$ up to order $r-1$ are denoted by $\Delta_i(\bm x)$, where $i \in \{1, \dots, r-1\}$, while the error in the $r^{th}$ order time derivative is obtained by $\Delta_r(\bm x) + \Delta_g(\bm x) \hspace{2pt} \bm u$.

Consequently, these discrepancies propagate into the auxiliary functions (\ref{eq5}), potentially compromising the safety of the system. In this paper, our goal is to approximate the adverse effect of uncertainty on the high-order safety certificate (\ref{eq8}) by adopting a supervised learning approach. This approach makes our HOCBF design robust to model uncertainty. To achieve this, we must separate the propagated error in (\ref{eq8}) from the approximations based on the nominal model. This goal depends on the design of the HOCBF, which is addressed in the following result.

\begin{thm}\label{thm1}
    The true auxiliary functions in (\ref{eq5}) can be written as $\zeta_i = \hat{\zeta}_i + R_i\left(\Delta_i,\dots, \Delta_1\right)$, for $i \in \{1, \dots, r-1 \}$, where $\hat{\zeta}_i$ calculated based on the nominal model (\ref{eq10}) and $R_i\left(\Delta_i,\dots, \Delta_1\right)$ is a linear function of its argument,
    if and only if class $\mathcal{K}$ functions $\alpha_i(x) = k_i x$, where $k_i>0$. Furthermore, $R_i\left(\Delta_i,\dots, \Delta_1\right), i \in \{1, \dots, r-1 \}$ is obtained by
    \begin{align}
        R_i\left(\Delta_i,\dots, \Delta_1\right) &= \Delta_i + \sum_{1\leq j_1 \leq i}k_{j_1}\Delta_{i-1}+ \nonumber\\
        & \sum_{1\leq j_1 < j_2\leq i}{k_{j_1}k_{j_2}} \Delta_{i-2}+\dots+ \nonumber\\
        & \sum_{1\leq j_1<\dots<j_{i-1}\leq i}{k_{j_1}k_{j_2}\dots k_{j_{i-1}}} \Delta_1.
        \label{eq13}
    \end{align}
\end{thm}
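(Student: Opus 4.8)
The plan is to reduce everything to the relative-degree structure. Since $h$ has relative degree $r$ for both the true and the nominal model (Assumption~\ref{assum1}), $L_gL_f^k h = L_{\hat g}L_{\hat f}^k h = 0$ for all $k\le r-2$, so for every $i\le r-1$ no control term appears on differentiating and the recursion (\ref{eq5}) collapses to $\zeta_i = L_f\zeta_{i-1}+\alpha_i(\zeta_{i-1})$, while the nominal auxiliary functions obey $\hat\zeta_i = L_{\hat f}\hat\zeta_{i-1}+\alpha_i(\hat\zeta_{i-1})$. A short induction then produces one and the same map $\Phi_i$, built only from $\alpha_1,\dots,\alpha_i$, with $\zeta_i=\Phi_i(h,L_f h,\dots,L_f^{i}h)$ and $\hat\zeta_i=\Phi_i(h,L_{\hat f}h,\dots,L_{\hat f}^{i}h)$ (using $L_f^{m+1}h=\nabla(L_f^m h)\cdot f$ at each step); since the first arguments agree and $L_f^m h-L_{\hat f}^m h=\Delta_m$, the quantity $\zeta_i-\hat\zeta_i$ is exactly the increment of $\Phi_i$ in its last $i$ slots by $(\Delta_1,\dots,\Delta_i)$, and the theorem asks precisely when that increment is linear in $(\Delta_1,\dots,\Delta_i)$ with state-independent coefficients.

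For the "if" direction, substitute $\alpha_i(x)=k_i x$: the recursion becomes $\zeta_i=(L_f+k_i)\zeta_{i-1}$, and because the $k_j$ are constants the first-order operators $L_f+k_j$ commute, so $\zeta_i=\big(\prod_{j=1}^{i}(L_f+k_j)\big)h=\sum_{m=0}^{i}e_{i-m}(k_1,\dots,k_i)\,L_f^m h$ with $e_\ell$ the $\ell$-th elementary symmetric polynomial; the nominal version is identical with $L_f$ replaced by $L_{\hat f}$. Subtracting, the $m=0$ terms cancel and $R_i=\zeta_i-\hat\zeta_i=\sum_{m=1}^{i}e_{i-m}(k_1,\dots,k_i)\,\Delta_m$, which is linear in the $\Delta_m$; expanding $e_{i-m}(k_1,\dots,k_i)=\sum_{1\le j_1<\cdots<j_{i-m}\le i}k_{j_1}\cdots k_{j_{i-m}}$ reproduces formula (\ref{eq13}), so this single computation establishes both sufficiency and the "Furthermore" claim.

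For necessity I would induct on $i\ge 2$. At $i=2$ one has $\zeta_2-\hat\zeta_2=\Delta_2+\alpha_1'(h)\Delta_1+\big[\alpha_2(\hat\zeta_1+\Delta_1)-\alpha_2(\hat\zeta_1)\big]$; treating $\Delta_1$ as freely variable with $h,\hat\zeta_1$ fixed and differentiating twice in $\Delta_1$ forces $\alpha_2''\equiv0$, hence $\alpha_2(x)=k_2 x$ (the class-$\mathcal{K}$ property removes the additive constant and gives $k_2>0$), and then state-independence of the coefficient $\alpha_1'(h)+k_2$ of $\Delta_1$ forces $\alpha_1'$ constant, i.e.\ $\alpha_1(x)=k_1 x$. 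For the inductive step, once $\alpha_1,\dots,\alpha_{i-1}$ are linear the "if"-computation gives $\zeta_{i-1}=\hat\zeta_{i-1}+R_{i-1}$ with $R_{i-1}$ a fixed linear form in $\Delta_1,\dots,\Delta_{i-1}$, so $\zeta_i-\hat\zeta_i=\big(\text{linear form in }\Delta_1,\dots,\Delta_i\big)+\big[\alpha_i(\hat\zeta_{i-1}+R_{i-1})-\alpha_i(\hat\zeta_{i-1})\big]$, and linearity of $R_i$ forces the bracket to be linear in $R_{i-1}$, i.e.\ $\alpha_i$ affine, hence $\alpha_i(x)=k_i x$.

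The step I expect to be the main obstacle is making the necessity direction rigorous: the argument above treats the errors $\Delta_m$ (and hence $R_{i-1}$) as quantities one may vary independently of the base-point values $h,\hat\zeta_{i-1}$, which is the natural reading when the true model is regarded as an arbitrary admissible perturbation of the nominal one. This can be made precise by noting that at any fixed state the values $L_f h,\dots,L_f^{r-1}h$ may be prescribed arbitrarily by a suitable choice of the vector field $f$ preserving the relative degree, so that if the claimed decomposition is to hold for every such true system the pointwise identities $\alpha_i''=0$ and the constancy of the coefficients are forced; one should also record that this direction is only meaningful for $r\ge3$, since for $r=2$ the single in-scope identity $R_1=\Delta_1$ holds regardless of $\alpha_1$. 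Everything else is routine bookkeeping with the recursion and with elementary symmetric polynomials.
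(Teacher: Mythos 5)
Your proposal is correct, and while it follows the same overall decomposition as the paper (propagate the residuals $\Delta_m$ through the recursion (\ref{eq5}) and ask when the increment is a constant-coefficient linear form), the two key steps are executed differently. For sufficiency and formula (\ref{eq13}), the paper runs an induction using the relation $\tfrac{\partial}{\partial t}\Delta_i=\Delta_{i+1}$ and a factorization at the end, whereas you write $\zeta_i=\bigl(\prod_{j\le i}(L_f+k_j)\bigr)h$ and read off the coefficients as elementary symmetric polynomials; this is computationally equivalent but cleaner, and it delivers (\ref{eq13}) in one stroke. For necessity, the paper reduces the identity $\alpha_i(\hat\zeta_{i-1}+R_{i-1})-\alpha_i(\hat\zeta_{i-1})=k_iR_{i-1}$ to Cauchy's additive functional equation and solves it using only strict monotonicity (no differentiability of $\alpha_i$ is invoked at that stage), whereas you differentiate twice in $\Delta_1$ to force $\alpha_i''\equiv 0$; this is legitimate because Definition \ref{def3} already assumes the $\alpha_i$ are sufficiently smooth, but the paper's Cauchy-equation route is slightly more economical in its hypotheses. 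Both arguments rest on the same tacit assumption that the residuals (equivalently $R_{i-1}$) and the base values $\hat\zeta_{i-1}$ can be varied independently so that the functional identity holds pointwise in two free variables; the paper leaves this implicit, while you flag it and sketch a justification by perturbing $f$ at a fixed state subject to the relative-degree constraints---a genuine improvement in rigor, though the claim that such perturbations always preserve relative degree would itself need a short argument. Your observation that the "only if" direction is vacuous for $r=2$ (since only $R_1=\Delta_1$ is in scope and holds for any $\alpha_1$) is a fair remark about the theorem's stated range, and it is consistent with how the paper actually uses the result, namely through $\zeta_r$ and $\Delta_{hocbf}$ where all $\alpha_i$ enter.
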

\begin{proof}
    For the forward direction, we are given that $\zeta_i = \hat{\zeta}_i + R_i$, for $i \in \{1, \dots, r-1 \}$. Based on this equation and (\ref{eq5}), we have
\begin{align*}
    \zeta_i &= \frac{\partial}{\partial t} (\hat \zeta_{i-1} + R_{i-1}) + \alpha_i (\hat \zeta_{i-1} + R_{i-1})\\
    &=  \frac{\partial}{\partial t} \hat \zeta_{i-1} + \frac{\partial}{\partial t} R_{i-1} + \alpha_i (\hat \zeta_{i-1} + R_{i-1})\\
    % &=  \frac{\partial}{\partial t} \hat \zeta_{i-1} + \alpha_i(\hat \zeta_{i-1}) + \frac{\partial}{\partial t} R_{i-1} + \alpha_i (\hat \zeta_{i-1} + R_{i-1}) - \alpha_i(\hat \zeta_{i-1})\\
    &= \hat \zeta_i + \underbrace{\frac{\partial}{\partial t} R_{i-1} + \alpha_i (\hat \zeta_{i-1} + R_{i-1})-\alpha_i(\hat \zeta_{i-1})}_{= R_i}.
\end{align*}
Note that for the time derivative of residuals in (\ref{eq11}), we have the following by the definition of high-order Lie derivatives
    \begin{align*}
        \frac{\partial}{\partial t} \Delta_i = \frac{\partial}{\partial t} L_f^i h - \frac{\partial}{\partial t} L_{\hat f}^i h = L_f^{i+1} h - L_{\hat f}^{i+1} h = \Delta_{i+1},
    \end{align*}
for $i \in \{1, \dots, r-2 \}$. Thus, taking time derivative of $R_{i-1}$ and multiplying by $k_i$, we have $R_i = \frac{\partial}{\partial t}R_{i-1} + k_i R_{i-1}$. Substituting this equation into the resulting expression for $R_i$, yields
\begin{equation*}
   \alpha_i (\hat \zeta_{i-1} + R_{i-1}) - \alpha_i(\hat \zeta_{i-1}) = k_i R_{i-1},
\end{equation*}
where we are looking for functions $\alpha_i$. This functional equation can be solved if the class $\mathcal{K}$ functions $\alpha_i$ satisfy Cauchy's additive functional equation
\begin{equation}
    \alpha_i(x+y) = \alpha_i(x) + \alpha_i(y),
    \label{eq15}
\end{equation}
for all $x,y$ in their domain. Next, we will show that (\ref{eq15}) only holds for $\alpha_i = k_i x$.
Since $\alpha_i$ satisfy (\ref{eq15}), by induction and simple calculation we can show that $\alpha_i(qx) = q\alpha_i(x)$, for all $q \in \mathbb{Q}, q>0$, where $\mathbb{Q}$ is the set of all rational numbers. Since $\alpha_i$ is strictly increasing the following limit exists
\begin{equation*}
    \lim_{x \rightarrow 0^{+}} \alpha_i(x) = \lim_{n \rightarrow +\infty} \alpha_i(\frac{1}{n}) = \lim_{n \rightarrow +\infty} \frac{1}{n}\alpha_i(1) = 0. 
\end{equation*}
Then, pick any $x$ from the domain and let $\{q_n\} \in \mathbb{Q}$ be any decreasing sequence converging to $x$. From the above results, we have that $\alpha_i(q_n) = q_n\alpha_i(1)= x\alpha_i(1)$. On the other hand, from (\ref{eq15}), we can write $\alpha_i(q_n) = \alpha_i(x) + \alpha_i(q_n - x)$. We know that $q_n - x \rightarrow 0^+$ as $n \rightarrow +\infty$, thus $\alpha_i(q_n) = \alpha_i(x)$. Thus, we can conclude $\alpha_i(x) = \alpha_i(1)x = k_ix$.

For the reverse direction, suppose that we have $\alpha_i(x)=k_ix$, for $i \in \{1, \dots, r \}$. We use mathematical induction to prove that $\zeta_i = \hat{\zeta}_i + R_i\left(\Delta_i,\dots, \Delta_1\right)$, where $R_i$ is given by (\ref{eq13}). We start from $i=1$:
   $$ \zeta_1 = \dot \zeta_0 + \alpha_1(\zeta_0) = \dot h + k_1 h = \hat{\dot h} + \Delta_1 + k_1 h
    = \hat \zeta_1 + \Delta_1.$$
Thus, $R_1(\Delta_1) = \Delta_1$. Now, assume that $\zeta_p = \hat{\zeta}_p + R_p\left(\Delta_p,\dots, \Delta_1\right)$ holds for $i = p$, where $R_p\left(\Delta_p,\dots, \Delta_1\right)$ is given by (\ref{eq13}). Then for $i = p+1$, from (\ref{eq5}) we have
\begin{align*}
    \zeta_{p+1} &= \dot \zeta_p + \alpha_{p+1}\left(\zeta_p \right ) = \frac{\partial}{\partial t}\left ( \hat \zeta_p + R_p \right ) + k_{p+1} \left ( \hat\zeta_p + R_p \right )\\
    &= \hat \zeta_{p+1} + \frac{\partial}{\partial t} R_p + k_{p+1} R_p.
\end{align*}
Using the obtained relationship for the time derivative of residuals, we have
\begin{align}
   \zeta_{p+1} &= \hat \zeta_{p+1} + \frac{\partial}{\partial t} R_p + k_{p+1} R_p\nonumber\\
   &= \hat \zeta_{p+1} + \Delta_{p+1} + \sum_{1\leq j_1 \leq p}k_{j_1}\Delta_{p}+\dots+\nonumber \\
   &\sum_{1\leq j_1<\dots<j_{p-1}\leq p}{k_{j_1}k_{j_2}\dots k_{j_{p-1}}} \Delta_2 \nonumber\\
   +& k_{p+1} (\Delta_p + \sum_{1\leq j_1 \leq p}k_{j_1}\Delta_{p-1}+\dots+ \nonumber\\
   &\sum_{1\leq j_1<\dots<j_{p-1}\leq p}{k_{j_1}k_{j_2}\dots k_{j_{p-1}}} \Delta_1).\label{eq14}
\end{align}
By factorization, we have
\begin{align*}
 \zeta_{p+1} &=  \hat \zeta_{p+1} + \Delta_{p+1} + \sum_{1\leq j_1 \leq p+1}k_{j_1}\Delta_{p}+\dots+ \\
   &\sum_{1\leq j_1<\dots<j_p\leq p}{k_{j_1}k_{j_2}\dots k_{j_p}} \Delta_1, 
\end{align*}
which proves our claim for $i = p+1$.
\end{proof}
\section{PROPOSED GP-BASED HOCBF DESIGN}
Theorem \ref{thm1} shows that we can approximate the effect of uncertainty on the high-order safety certificate by supervised learning only if we choose linear class $\mathcal{K}$ functions in our HOCBF design. It also sets the basis for reducing the complexity of our proposed learning method.

By substituting the results from Theorem \ref{thm1} in ($\ref{eq5})$ recursively and using (\ref{eq11}) and (\ref{eq12}), we can rewrite $\zeta_r(\bm x, \bm u)$ as
\begin{align*}
    \zeta_r(\bm x, \bm u) &= \hat \zeta_r(\bm x, \bm u) + \underbrace{R_r(\Delta_r, \dots, \Delta_1) + \Delta_g(\bm x) \bm u}_{=\Delta_{hocbf}(\bm x, \bm u)},
    % \label{eq16}
\end{align*}
where $\Delta_i:\mathbb{R}^n \rightarrow \mathbb{R}, i\in \{1, \dots, r\}$ and $\Delta_g:\mathbb{R}^n \rightarrow \mathbb{R}^{1 \times m}$, which gives us the high-order safety certificate of the form
\begin{equation}
    \hat \zeta_r(\bm x, \bm u) + \Delta_{hocbf}(\bm x, \bm u)\geq 0.
    \label{eq16}
\end{equation}
The result in (\ref{eq16}) enables us to compensate for the adverse effect of the uncertainty in the safety of the true system by learning the residual term $\Delta_{hocbf}(\bm x, \bm u)$.
\begin{rem}\label{rem1}
    In many data-driven safety-critical control methods, the target functions are the unknown dynamics $f, g$ in (\ref{eq1}). However, $\Delta_{hocbf}$ is a scalar value. This fact substantially reduces the complexity of the learning process by addressing a lower-dimensional problem compared to the latter approach.
\end{rem}
Substituting $i = r$ in (\ref{eq13}) and expanding $\Delta_{hocbf}(\bm x, \bm u)$ yields
\begin{equation}
    \Delta_{hocbf}(\bm x, \bm u) = \gamma_r \Delta_r + \gamma_{r-1} \Delta_{r-1} + \dots + \gamma_1 \Delta_1+ \Delta_g(\bm x) \bm u,
    \label{eq17}
\end{equation}
where we denote coefficients of $\Delta_i$'s in (\ref{eq13}) by $\gamma_i, i \in \{1, \dots, r\}, \gamma_r = 1$, to simplify notations.

We can approximately measure $\Delta_i$ using finite difference methods and collect a dataset for our learning method:
\begin{align}
    \tilde{\Delta}_j &= FD^j(\bm x(t), \Delta t) - \hat{h}^j(\bm{x}(t)), j \in \{1, \dots, r-1 \},\nonumber\\
    \tilde{\Delta}_r &= FD^r(\bm x(t), \bm{u}(t), \Delta t) - \hat{h}^r(\bm{x}(t), \bm{u}(t)),
    \label{eq18}
\end{align}
where $\Delta t$ is the sampling interval.
Then, the approximated measurement can be derived by substituting (\ref{eq18}) into (\ref{eq17}). 
\subsection{GP for real-valued functions}\label{3subsec1}
GPs provide a framework to approximate nonlinear functions. We represent a GP approximation of $v:\mathcal{X} \rightarrow \mathbb{R}$ as $v(\bm x) \sim \mathcal{GP}\left(m(\bm x), k(\bm x, \bm x') \right)$, which is fully specified by its mean function $m:\mathcal{X} \rightarrow \mathbb{R}$ and covariance (kernel) function $k:\mathcal{X} \times \mathcal{X} \rightarrow \mathbb{R}$. Our prior knowledge of the problem can shape the form of $m(\bm x)$ and $k(\bm x, \bm x')$, considering that the covariance function must satisfy positive definiteness \cite{williams2006gaussian}. Given noisy measurements $w_j = v(\bm x_j) +  \varepsilon_j, j\in \{1, \dots, N \}$, which are corrupted by Gaussian noise $\varepsilon_j \sim \mathcal{N}(0,\,\sigma^{2})$, a GP model can infer a posterior mean and variance for a test point $\bm x_t$ conditioned on the measurements
\begin{align} 
    \mu(\bm x_t) &= {{\bm{w}}^T}{\left( {K + \sigma _n^2 I} \right)^{-1}}\bar {K}^T,\nonumber \\ 
    \sigma(\bm x_t)^2 &= k\left( {{x_t},{x_t}} \right) - {\bar K}{\left( {K + \sigma _n^2 I} \right)^{- 1}}\bar{K}^T,
    \label{eq19}
\end{align}
where $\bm w \in \mathbb{R}^N$ is the vector of measurements $w_j$, $K\in \mathbb{R}^{N \times N}$ is the Gram matrix with elements $K_{ij} = k(\bm x_i, \bm x_j)$, and $\bar K = \begin{bmatrix}
    k(x_t, x_1), \dots, k(x_t, x_N)
\end{bmatrix}^T \in \mathbb{R}^N$. We considered zero prior mean to obtain (\ref{eq19}).
\subsection{GP for uncertainty-aware HOCBF}
Going back to the original problem, our objective is to approximate $\Delta_{hocbf}$ using GP. From (\ref{eq17}), we can verify that $\Delta_{hocbf}$ is control-affine and can be characterized by
\begin{equation*}
    \Delta_{hocbf} = \begin{bmatrix}
        \Delta_1, \dots, \Delta_r, \Delta_g
    \end{bmatrix} \begin{bmatrix}
        \bm \gamma\\ \bm u
    \end{bmatrix} \coloneqq \begin{bmatrix}
        \Delta_1, \dots, \Delta_r, \Delta_g
    \end{bmatrix} \bm y,
\end{equation*}
where $\bm \gamma = \begin{bmatrix}
    \gamma_1& \dots& \gamma_r
\end{bmatrix}^T\in \mathbb{R}^r$, also we denote the concatenation of $\bm \gamma, \bm u$ by $\bm y\in \mathbb{R}^{m+r}$. This prior knowledge of the problem can be embedded into the structure of our kernel function, which facilitates a more accurate regression.
\begin{prop}\label{prop2}
     Let $\bm x \in \mathbb{R}^n$, $\bm y \in \mathbb{R}^{m+r}$, and define the input domain $\mathcal{X} = \mathbb{R}^{n+m+r}$. Then consider the real-valued function $k_c: \mathcal{X} \times \mathcal{X} \rightarrow \mathbb{R}$, defined by
    \begin{equation}
        k_c \left(\begin{bmatrix}
        \bm x\\\bm y
    \end{bmatrix}, \begin{bmatrix}
        \bm x'\\ \bm y'
    \end{bmatrix} \right ) = \bm y^T \Lambda(\bm x, \bm x') \bm y', 
    \label{eq20}
    \end{equation}
    where $\Lambda(\bm x, \bm x') = \diag(\begin{bmatrix}
        k_1(\bm x,\bm x'), \dots, k_{m+r}(\bm x,\bm x')\end{bmatrix} )$ and $k_i: \mathbb{R}^n \times \mathbb{R}^n \rightarrow \mathbb{R}, i\in\{1, \dots, m+r \}$. If $k_i$'s are positive-definite kernels, then $k_c$ is a positive-definite kernel.
\end{prop}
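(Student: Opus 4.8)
The plan is to verify directly the two defining properties of a positive-definite kernel — symmetry and nonnegativity of every finite quadratic form (Gram matrix) — by exploiting the fact that, by construction, $k_c$ is a finite sum of products of simpler kernels. Expanding the quadratic form in (\ref{eq20}) gives
\begin{equation*}
    k_c\!\left(\begin{bmatrix}\bm x\\\bm y\end{bmatrix}, \begin{bmatrix}\bm x'\\ \bm y'\end{bmatrix}\right) = \bm y^T \Lambda(\bm x,\bm x')\,\bm y' = \sum_{i=1}^{m+r} y_i\, y_i'\, k_i(\bm x, \bm x'),
\end{equation*}
since $\Lambda$ is diagonal. Symmetry is then immediate: each $k_i$ is symmetric by assumption and each scalar product $y_i y_i'$ is symmetric in its two arguments.

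For positive definiteness, I would fix an arbitrary finite collection of inputs $\{(\bm x^{(j)},\bm y^{(j)})\}_{j=1}^{N}\subset\mathcal{X}$ and arbitrary coefficients $c_1,\dots,c_N\in\mathbb{R}$, and rearrange the double sum by swapping the order of summation:
\begin{equation*}
    \sum_{j=1}^{N}\sum_{l=1}^{N} c_j c_l\, k_c\!\left(\begin{bmatrix}\bm x^{(j)}\\\bm y^{(j)}\end{bmatrix}, \begin{bmatrix}\bm x^{(l)}\\ \bm y^{(l)}\end{bmatrix}\right) = \sum_{i=1}^{m+r}\ \sum_{j=1}^{N}\sum_{l=1}^{N} \bigl(c_j y^{(j)}_i\bigr)\bigl(c_l y^{(l)}_i\bigr)\, k_i\!\left(\bm x^{(j)},\bm x^{(l)}\right).
\end{equation*}
For each fixed $i$, set $d^{(i)}_j \coloneqq c_j y^{(j)}_i \in \mathbb{R}$; the inner double sum is exactly the quadratic form of the Gram matrix of $k_i$ at the points $\bm x^{(1)},\dots,\bm x^{(N)}$ with weights $d^{(i)}_1,\dots,d^{(i)}_N$, which is nonnegative because $k_i$ is a positive-definite kernel. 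The outer sum over $i$ is therefore a sum of nonnegative terms, hence nonnegative, which is precisely the claim.

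An equivalent, slightly more structural way to phrase the same argument — which I would mention as a remark — is that each summand $y_i y_i'\, k_i(\bm x,\bm x')$ is the product of the pullback of $k_i$ to $\mathcal{X}$ (which simply ignores the $\bm y$-coordinate and remains positive-definite) with the rank-one kernel $(\bm y,\bm y')\mapsto y_i y_i'$ on $\mathbb{R}^{m+r}$ (whose Gram matrix is $\bm d^{(i)}(\bm d^{(i)})^T\succeq 0$); positive definiteness of $k_c$ then follows from closure of positive-definite kernels under products (the Schur product theorem) and under finite sums. I do not anticipate a genuine obstacle: the entire content is the bookkeeping step of rewriting $\bm y^T\Lambda(\bm x,\bm x')\bm y'$ as the separated sum, and the only point needing a little care is keeping the substitution $d^{(i)}_j = c_j y^{(j)}_i$ consistent so that the positive-definiteness hypothesis on $k_i$ is applied to the correct quadratic form for each index $i$.
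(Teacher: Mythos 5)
Your proposal is correct. The decomposition at its heart, $k_c = \sum_{i=1}^{m+r} y_i y_i' k_i(\bm x,\bm x')$, is the same bookkeeping step the paper performs, but the route from there differs: you verify positive definiteness directly from the defining condition, by fixing a finite set of inputs and coefficients, swapping the order of summation, and recognizing each inner double sum (with the reweighted coefficients $d^{(i)}_j = c_j y^{(j)}_i$) as a nonnegative quadratic form of the Gram matrix of $k_i$. The paper instead invokes the feature-map characterization of positive-definite kernels, writing $k_i(\bm x,\bm x') = \varphi_i^T(\bm x)\varphi_i(\bm x')$ and assembling the stacked map $\psi(\bm x,\bm y) = \bigl(y^1\varphi_1(\bm x),\dots,y^{m+r}\varphi_{m+r}(\bm x)\bigr)$ so that $k_c = \psi^T\psi$. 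Your argument is more elementary and self-contained, since it does not rely on the existence of feature maps for the constituent kernels (a fact that, for general positive-definite kernels with infinite-dimensional feature spaces, rests on Moore--Aronszajn rather than a finite-dimensional identity); your closing remark via the Schur product theorem and closure under sums is a third valid packaging of the same idea. What the paper's construction buys in exchange is an explicit feature map for $k_c$, which makes the subsequent observation that $k_c$ reproduces an RKHS $\mathcal{H}_{k_c}(\mathcal{X})$ (and the later RKHS-norm assumption on $\Delta_{hocbf}$) especially transparent, whereas your proof establishes validity of the kernel without exhibiting that structure. Either proof suffices for the proposition as stated.
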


\begin{proof}
    Since $k_i(\bm x, \bm x')$'s are positive definite (valid) kernels, then by the definition, there is a feature map $\varphi(\bm x)$ such that $k_i(\bm x, \bm x') = \varphi^T_i(\bm x) \varphi_i(\bm x')$ for $i\in \{1, \dots, m+r \}$. Now, Let $y^i$ and $y'^i$ be the $i^{th}$ element of the corresponding vectors. We have
    \begin{align*}
        k_c &= \bm y^T \diag \left([
        \varphi^T_1(\bm x) \varphi_1(\bm x'), \dots, \varphi^T_{m+r}(\bm x) \varphi_{m+r}(\bm x')
    ] \right) \bm y'\\
    &= \sum_{i=1}^{m+r} y^i \varphi_i^T(\bm x) \varphi_i(\bm x') y'^i\\
    &= \begin{bmatrix}
         y^1 \varphi_1^T(\bm x) & \dots & y^{m+r} \varphi_{m+r}^T(\bm x)
    \end{bmatrix}
    \begin{bmatrix}
        y'^1 \varphi_1(\bm x') \\ \vdots \\ y'^{m+r} \varphi_{m+r}(\bm x')
    \end{bmatrix}\\
        &= \psi^T\left(\begin{bmatrix}
        \bm x\\\bm y
    \end{bmatrix} \right ) 
    \psi\left(\begin{bmatrix}
        \bm x'\\ \bm y'
    \end{bmatrix} \right ),
    \end{align*}
    which again by the definition of kernels, proves that $k_c(\cdot, \cdot)$ is a positive-definite (valid) kernel.
\end{proof}
An immediate consequence of Proposition \ref{prop2} is that $k_c$ is the reproducing kernel of a reproducing kernel Hilbert space (RKHS) $\mathcal{H}_{k_c}(\mathcal{X})$.

A dataset can be generated by collecting trajectories of the true system. We denote it by $\mathcal{D} = \{ ((\bm x_j, \bm y_j), z_j) \}_{j=1}^N$, where $(\bm x_j, \bm y_j) \in \mathcal{X}\subset \mathbb{R}^n \times \mathbb{R}^{m+r}$ is the input data and outputs $z_j$ are obtained from (\ref{eq17}) and (\ref{eq18}). Also, we define $X = \begin{bmatrix} \bm x_1, \dots, \bm x_N\end{bmatrix}\in \mathbb{R}^{n \times N}$, and $Y = \begin{bmatrix} \bm y_1, \dots, \bm y_N\end{bmatrix}\in \mathbb{R}^{(m+r) \times N}$. Based on the collected dataset, the GP model which is equipped with the kernels of the form $k_c$ in Proposition \ref{prop2}, gives the following expression for the posterior distribution of a test point $(\bm x_*, \bm y_*)$:
\begin{align}
    m(\bm x_*, \bm y_*) &= \bm z^T \left(K_c + \sigma_n^2 I \right)^{-1}\bar{K}^T\bm y_*,\label{eq21}\\
    \sigma^2(\bm x_*, \bm y_*) &= \bm y_*^T \left ( \Lambda(\bm x_*,\bm x_*) - \bar{K}\left(K_c + \sigma_n^2 I \right)^{-1}\bar{K}^T \right ) \bm y_*,\label{eq22}
\end{align}
where $K_c$ is the Gram matrix of $k_c(\cdot, \cdot)$ for the input data pair $(X,Y)$, and $\bar K \in \mathbb{R}^{(m+r)\times N}$ is given by
\begin{align*}
    \bar K &= \begin{bmatrix}
        \bm{\bar{k}_1} & \bm{\bar{k}_2} & \dots &\bm{\bar{k}}_N
    \end{bmatrix} 
    \circ Y,\\
    \bm{\bar{k}_i} &= \begin{bmatrix}
        k_1(x_*, x_i), \dots, k_{m+r}(x_*, x_i)
    \end{bmatrix}^T, i \in \{1, \dots, N \},
\end{align*}
where $\circ$ denotes the element-wise multiplication (Hadamard product) of two matrices with identical dimensions.
This result highlights a key advantage of using $k_c$ kernel. The resulting expression for the posterior mean $m(\bm x_*, \bm y_*) = \bm \mu(\bm x_*)\bm y_*$ and the variance $\sigma^2(\bm x_*, \bm y_*) =\bm y_*^T \Sigma(\bm x_*) \bm y_*$ are linear and quadratic in $\bm y_*$ (and the control input), respectively. We will exploit this feature in the next section to establish a convex GP-based safety filter based on HOCBFs.
\subsection{Confidence bounds for the estimation of the effect of uncertainty}

Although GP is inherently a probabilistic model, a high probability error bound can be derived for the distance between the true value and the GP prediction. This requires an additional assumption on $\Delta_{hocbf}$.

\begin{assum}\label{assum2}
    Each $\Delta_i, i \in \{1, \dots, r \}$ and each $i^{th}$ element of $\Delta_g$ for $i \in \{1, \dots, m \}$ is a member of $\mathcal{H}_{k_i}$ with bounded RKHS norm $\| \Delta_i \|_{k_i} \leq \eta$ for $i \in \{1, \dots, m+r \}$.
\end{assum}

\begin{lem}[\cite{srinivas2009gaussian}]
 Let Assumption \ref{assum2} hold. Then, with a probability of at least $1-\delta$, the following holds
    % Then, the GP approximation error is bounded as
        \begin{equation} 
            \vert m(\bm x, \bm y)-\Delta_{hocbf}(\bm x, \bm u)\vert \leq\beta\sigma(\bm x, \bm y), 
            \label{eq23}
        \end{equation}
    on a compact set ${\mathcal{D}} \subset \mathcal{X}, \delta \in (0,1)$ and $\beta=\sqrt{2\eta^2+300 \kappa_{N+1} \log ^3((N+1) / \delta)}$, where $\kappa_{N+1}$ is the maximum mutual information that can be obtained after getting $N+1$ data, and $\eta$ is the upper bound of the corresponding RKHS norm, and $m(\bm x, \bm y)$ and $\sigma(\bm x, \bm y)$ are the posterior mean and standard deviation of a test point $(\bm x, \bm y) \in \mathcal{X}$.
\end{lem}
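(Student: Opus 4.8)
The plan is to recognize this statement as a direct instantiation of the agnostic GP concentration bound of Srinivas et al.\ (their Theorem~6 in \cite{srinivas2009gaussian}), so that essentially all the work lies in checking its hypotheses in our setting. Two ingredients are needed: (i) that the target $\Delta_{hocbf}$, regarded as a real-valued function on $\mathcal{X}=\mathbb{R}^{n+m+r}$, belongs to the RKHS $\mathcal{H}_{k_c}(\mathcal{X})$ induced by the structured kernel $k_c$ of Proposition~\ref{prop2} and has RKHS norm controlled by $\eta$; and (ii) that $m(\bm x,\bm y)$ and $\sigma(\bm x,\bm y)$ in \eqref{eq21}--\eqref{eq22} are exactly the posterior mean and standard deviation of the scalar GP with kernel $k_c$ trained on the collected data with observation-noise variance $\sigma_n^2$.

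For (i), I would reuse the feature-map construction from the proof of Proposition~\ref{prop2}. Writing $k_i(\bm x,\bm x')=\varphi_i^T(\bm x)\varphi_i(\bm x')$, Assumption~\ref{assum2} gives $\Delta_i(\bm x)=\langle w_i,\varphi_i(\bm x)\rangle$ with $\|w_i\|=\|\Delta_i\|_{k_i}\le\eta$ for $i\in\{1,\dots,m+r\}$. Since $\Delta_{hocbf}(\bm x,\bm y)=\sum_{i=1}^{m+r} y_i\,\Delta_i(\bm x)=\langle(w_1,\dots,w_{m+r}),\,\psi(\bm x,\bm y)\rangle$ with $\psi$ the feature map of $k_c$ exhibited in Proposition~\ref{prop2}, it follows that $\Delta_{hocbf}\in\mathcal{H}_{k_c}$ with $\|\Delta_{hocbf}\|_{k_c}^2=\sum_{i=1}^{m+r}\|\Delta_i\|_{k_i}^2\le (m+r)\eta^2$. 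For (ii), I would simply note that \eqref{eq21}--\eqref{eq22} are the standard GP posterior formulas specialized to the Gram matrix $K_c$ and cross-covariance $\bar K$ of $k_c$ (already derived in Section~\ref{3subsec1}), with the finite-difference residuals in \eqref{eq18} playing the role of zero-mean noise of variance $\sigma_n^2$.

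Given (i) and (ii), the conclusion is immediate: Theorem~6 of \cite{srinivas2009gaussian} states that for any fixed $f\in\mathcal{H}_{k_c}$ with $\|f\|_{k_c}\le B$ and (sub-)Gaussian noise of proxy variance $\sigma_n^2$, with probability at least $1-\delta$ one has $|m_N(\cdot)-f(\cdot)|\le\beta_N\,\sigma_N(\cdot)$ uniformly on a compact set, with $\beta_N=\sqrt{2B^2+300\,\kappa_{N+1}\log^3((N+1)/\delta)}$ and $\kappa_{N+1}$ the maximum information gain after $N+1$ observations. Taking $f=\Delta_{hocbf}$, setting $N$ to the number of training samples, and letting the compact set be $\mathcal{D}$ yields \eqref{eq23}.

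I expect the main obstacle to be bookkeeping on the constant in $\beta$: the aggregation $\|\Delta_{hocbf}\|_{k_c}^2=\sum_i\|\Delta_i\|_{k_i}^2$ gives the bound $(m+r)\eta^2$, whereas \eqref{eq23} uses $2\eta^2$, so one must either reinterpret $\eta$ in Assumption~\ref{assum2} as a bound on the aggregate norm $\|\Delta_{hocbf}\|_{k_c}$ itself, or carry the factor $m+r$ through the statement. A secondary, more conceptual point to treat carefully is justifying that the finite-difference measurement model of \eqref{eq18} meets the noise hypotheses of \cite{srinivas2009gaussian} (the discretization error is not literally Gaussian), and that the bound is asserted only on the compact set $\mathcal{D}$ rather than all of $\mathcal{X}$, since $\kappa_{N+1}$ and the uniform validity of the concentration inequality are tied to that set.
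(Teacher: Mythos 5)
The paper does not prove this lemma at all—it is imported verbatim (as the citation indicates) from Srinivas et al., and your route of verifying its hypotheses (membership of $\Delta_{hocbf}$ in $\mathcal{H}_{k_c}$ via the feature map of Proposition~\ref{prop2}, and identifying \eqref{eq21}--\eqref{eq22} as the corresponding posterior) and then invoking their Theorem~6 is exactly the intended instantiation. The two caveats you flag—the aggregate RKHS norm giving $(m+r)\eta^2$ rather than the stated $2\eta^2$ unless $\eta$ is read as a bound on $\|\Delta_{hocbf}\|_{k_c}$ itself, and the finite-difference/Gaussian-noise mismatch with the boundedness assumption of \cite{srinivas2009gaussian}—are genuine looseness in the paper's statement, not gaps in your argument.
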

Based on the probabilistic bounds on (\ref{eq23}) and the high-order safety certificate (\ref{eq16}), we can conclude that the following hold with a probability of at least $1 - \delta$
\begin{equation}
    \zeta_r(\bm x, \bm u) \geq \hat{\zeta}_r(\bm x, \bm u) + m(\bm x, \bm y) - \beta \sigma(\bm x, \bm y),
    \label{eq24}
\end{equation}
where $m(\bm x, \bm y)$ and $\sigma(\bm x, \bm y)$ are the mean and standard deviation of a query point derived from (\ref{eq21}), (\ref{eq22}). Incorporating (\ref{eq24}) into the optimization problem (\ref{eq9}), we have
\begin{align}
    \boldsymbol{u}_{s} ={} &\underset{\boldsymbol{u} \in \mathcal{R}^m}{\arg\min} \hspace{4pt} {\| \bm u - \bm u_{nom} \|^2_2} \nonumber\\
    &\textrm{s.t.} \quad \hat{\zeta}_r(\bm x, \bm u) + m(\bm x, \bm y) - \beta \sigma(\bm x, \bm y) \geq 0.
    % &\textrm{s.t.} \quad L_f^r h+ L_gL_f^{r- 1} h \hspace{2pt} \bm u + \sum _{i = 1}^{r - 1}L_f^i(\alpha _{r - i}\circ \zeta _{r-i-1})\nonumber\\ &\quad \quad+\alpha_r(\zeta_{r-1}) \geq 0.
    \label{eq25}
\end{align}
Note that the constraint in (\ref{eq25}) is constructed regardless of the underlying true dynamics.
\begin{thm}\label{thm2}
    The optimization problem (\ref{eq25}) is convex and can be converted into the standard second-order cone program (SOCP) of the form (\ref{eq26}), if $m(\bm x, \bm y)$ and $\sigma(\bm x, \bm y)$ satisfy (\ref{eq21}) and (\ref{eq22}).
    \begin{align}
       \bm u_{socp} ={} &\underset{\bm \omega}{\arg\min} \hspace{4pt} \bm f^T \bm \omega \nonumber\\
        \text { s.t. } 
        & \left\|M_1 \bm \omega+\bm{n}_1\right\|_2 \leq \bm{p}_1^T \bm \omega,\nonumber\\
        & \left\|M_2 \bm \omega+\bm{n}_2\right\|_2 \leq \bm{p}_2^T \bm \omega+q_2,
        % \quad i=1, \ldots, N_c,
        \label{eq26}
    \end{align}
    where $\bm \omega \in \mathbb{R}^{m+1}$ and $\bm f \in \mathbb{R}^{m+1}$, and the matrix $M_i$ and $\bm{n}_i, \bm{p}_i, q_i$, for $i=1,2$, have appropriate dimensions.
\end{thm}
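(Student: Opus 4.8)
The plan is to show that, after an epigraph reformulation of the quadratic cost, every term appearing in the feasibility constraint of (\ref{eq25}) is either affine in $\bm u$ or the Euclidean norm of an affine function of $\bm u$, so that (\ref{eq25}) is precisely an instance of (\ref{eq26}) and hence convex.

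First I would record the affine structure inherited from the earlier results. Since Theorem~\ref{thm1} forces $\alpha_i(x)=k_i x$ and $h$ has relative degree $r$, the functions $\hat\zeta_0,\dots,\hat\zeta_{r-1}$ depend on $\bm x$ only, so the expression inside the supremum in (\ref{eq8}) evaluated on the nominal model is affine in $\bm u$: $\hat\zeta_r(\bm x,\bm u)=a(\bm x)+\bm b(\bm x)^{T}\bm u$ for some computable $a,\bm b$. Writing the GP input as $\bm y=[\bm\gamma^{T},\bm u^{T}]^{T}$, which is affine in $\bm u$ because $\bm\gamma$ is a fixed vector of coefficients, equation (\ref{eq21}) gives $m(\bm x,\bm y)=\bm\mu(\bm x)^{T}\bm y$, hence affine in $\bm u$. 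For the uncertainty term, (\ref{eq22}) gives $\sigma^{2}(\bm x,\bm y)=\bm y^{T}\Sigma(\bm x)\bm y$ with $\Sigma(\bm x)=\Lambda(\bm x,\bm x)-\bar K(K_c+\sigma_n^{2}I)^{-1}\bar K^{T}$; as a GP posterior covariance, $\Sigma(\bm x)$ is positive semidefinite (equivalently $\bm y^{T}\Sigma(\bm x)\bm y\ge 0$ for all $\bm y$), so it has a symmetric square root and $\sigma(\bm x,\bm y)=\|\Sigma^{1/2}(\bm x)\,\bm y\|_2$ is the norm of an affine function of $\bm u$.

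Substituting these into the constraint $\hat\zeta_r(\bm x,\bm u)+m(\bm x,\bm y)-\beta\sigma(\bm x,\bm y)\ge 0$ and moving all affine terms to one side, I obtain a single second-order cone constraint of the form $\beta\|\Sigma^{1/2}(\bm x)\bm y\|_2\le \ell(\bm x,\bm u)$ with $\ell$ affine in $\bm u$. To dispose of the quadratic objective I introduce a slack $t$ and replace $\min\|\bm u-\bm u_{nom}\|_2^{2}$ by $\min t$ subject to $\|\bm u-\bm u_{nom}\|_2\le t$, which has the same minimizer and contributes a second second-order cone constraint. Collecting $\bm\omega=[\bm u^{T},t]^{T}$ and $\bm f=[\bm 0^{T},1]^{T}$, the epigraph constraint matches the first line of (\ref{eq26}) with $q_1=0$ and the safety constraint matches the second line, identifying $M_i,\bm n_i,\bm p_i,q_i$ by inspection; thus $N_c=2$. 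Since an SOCP (linear objective over second-order cones) is a convex program, (\ref{eq25}) is convex, completing the proof.

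I do not foresee a substantive obstacle; the argument is mostly bookkeeping. The one step requiring care is the positive semidefiniteness of $\Sigma(\bm x)$ used to form $\Sigma^{1/2}(\bm x)$—I would justify it either as a property of any GP posterior covariance or, directly, by a Schur-complement argument applied to the positive semidefinite Gram matrix of the kernel $k_c$ from Proposition~\ref{prop2}. A minor point worth flagging is that $\bm y$ is affine, not linear, in $\bm u$ (because $\bm\gamma$ is fixed), which is exactly why the constant terms $\bm n_2$ and $q_2$ are present in the reduced SOCP.
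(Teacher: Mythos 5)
Your proposal is correct and follows essentially the same route as the paper's proof: an epigraph reformulation of the quadratic cost with slack variable $t$, the observation that $\hat{\zeta}_r$ and the posterior mean $m(\bm x,\bm y)$ are affine in $\bm u$, and writing $\sigma(\bm x,\bm y)$ as the Euclidean norm of an affine function of $\bm u$ via a square root of the posterior covariance $\Sigma(\bm x)$, yielding the two cone constraints of (\ref{eq26}). The only (harmless) difference is that you justify $\Sigma(\bm x)\succeq 0$ as a general GP-posterior/Schur-complement property, whereas the paper asserts positive definiteness from the validity of $k_c$ and $\sigma_n>0$; either suffices for the factorization.
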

\begin{proof}
    % We need to show that (\ref{eq25}) can be transformed 
    Let's denote the objective function of (\ref{eq25}) by $J=\left\|\bm u-\bm u_{nom}\right\|_2^2$. Since the Euclidean norm is always positive, we can consider the equivalent objective function $J_1=\left\|\bm u-\bm u_{nom}\right\|_2$. We set $\left\|\bm u-\bm u_{nom}\right\|_2 \leq t$, where $t \in \mathbb{R}$ is an auxiliary variable and convert it to a second-order cone constraint. Now, we need to solve a new minimization problem with new augmented variable $\bm \omega = \begin{bmatrix}\bm u^T & t\end{bmatrix}^T\in \mathbb{R}^{m+1}$ as
    \begin{align}
        \min & \begin{bmatrix}\bm 0_m^T &1 \end{bmatrix} \bm \omega \nonumber\\
        \text { s.t. } & \|\underbrace{\begin{bmatrix}I_m & \bm 0_m\end{bmatrix}}_{\coloneqq M_1} \bm \omega\underbrace{-\bm{u}_{nom}}_{\coloneqq\bm n_1}\| \leqslant \underbrace{\begin{bmatrix} \bm 0_m^T & 1\end{bmatrix}}_{\coloneqq \bm p_1^T} \bm \omega
        \label{eq27},
    \end{align}
    where $I_m$ is a $m \times m$ identity matrix and $\bm 0_m \in \mathbb{R}^m$ is a vector of zeros, and $\bm f =  \begin{bmatrix}\bm 0_m^T &1 \end{bmatrix}^T$.

    Next, we need to show that the constraint in (\ref{eq25}) is a SOC constraint. Note that based on (\ref{eq21}), we have $m(\bm x, \bm y) = \bm \mu (\bm x) \bm y$, which can be written as
    \begin{align*}
        m(\bm x, \bm y) = \bm \mu_{r} \bm \gamma + \bm \mu_{m} \bm u,
    \end{align*}
    where the notation $\bm \mu_{r}$ refers the first $r$ elements, and $\bm \mu_{m}$ refers the last $m$ elements of the row vector $\bm \mu \in \mathbb{R}^{1 \times (m+r)}$. Also, from (\ref{eq8}), we know that $\hat \zeta_r$ is control affine, i.e. $\hat{\zeta}_r = \bm{\hat{\zeta}}_f \bm \gamma + \bm{\hat{\zeta}}_g \bm u$, where the row vectors $\bm{\hat{\zeta}}_f \in \mathbb{R}^{1\times r}$ and $\bm{\hat{\zeta}}_g \in \mathbb{R}^{1 \times m}$. Thus, the right hand side of the inequality in (\ref{eq25}) is affine in $\bm u$ as desired.

    Based on (\ref{eq22}), we have that $\sigma^2(\bm x, \bm y)= \bm y^T \Sigma(\bm x) \bm y$, where $\Sigma \in \mathbb{R}^{(m+r) \times (m+r)}$. Since $k_c$ is a valid kernel and $\sigma_n>0$, $\Sigma$ is positive definite. Then, we have
    \begin{equation*}
         \sigma(\bm x, \bm y)=\sqrt{\bm y^T \Sigma \bm y}=\sqrt{\bm y^T L^T L \bm y}=\|L \bm y\|_2,
    \end{equation*}
    where $L \in \mathbb{R}^{(m+r) \times (m+r)}$ is the matrix square root of $\Sigma$. By definition $\bm y=\begin{bmatrix} \bm \gamma^T & \bm u^T \end{bmatrix}^T$, thus we have $\sigma(\bm x, \bm y) = \left\|L^r \bm \gamma+L^m \bm u\right\|_2$, where the notation $L^r$ and $L^m$ refers to the first $r$ columns, and the last $m$ columns of the matrix $L$, respectively. Now, we can rewrite the high-order safety certificate (\ref{eq25}) as a SOC constraint of the form
    \begin{equation}
        \left\|A(\bm x) \bm u+ \bm{b}(\bm x)\right\|_2 \leq \bm{c}(\bm x) \bm u+d(\bm x), 
        \label{eq28}
    \end{equation}
    where 
    \begin{align*}
        A(\bm x) &= \beta L^m \in \mathbb{R}^{(m+r) \times m},\\
        \bm b(\bm x) &= \beta L^r \bm \gamma \in \mathbb{R}^{m+r},\\
        \bm c(\bm x) &= \bm{\hat \zeta}_g + \bm \mu_m \in \mathbb{R}^{1 \times m},\\
        d(\bm x) &= (\bm{\hat \zeta}_f + \bm{\mu}_r) \bm \gamma \in \mathbb{R}.
    \end{align*}
    It can be expressed in terms of the new variable $\bm \omega$ as
    \begin{equation*}
        \| \underbrace{\begin{bmatrix}
        A(\bm x) & \bm{0}_{m+r}
        \end{bmatrix}}_{\coloneqq M_2} \bm \omega+\underbrace{\bm{b}(\bm x)}_{\coloneqq \bm n_2}\|_2 \leq \underbrace{\begin{bmatrix}\bm{c}(\bm x) & 0 \end{bmatrix}}_{\coloneqq \bm p_2^T} \bm \omega+ \underbrace{d(\bm x)}_{\coloneqq q_2}.
    \end{equation*}
Adding the resulting constraint to (\ref{eq27}) leads to (\ref{eq26}), which concludes the proof.
\end{proof}
\subsection{Feasibility analysis}
In this section, we will analyze the point-wise feasibility of the SOCP (\ref{eq25}). Intuitively, if the GP prediction is less accurate, the resulting standard deviation $\sigma(\bm x, \bm y)$ will be large, leading to a more conservative approach. In particular, it restricts the space on which a safe control input can be selected. In consequence, it may make the SOCP infeasible. In the following, we theoretically analyze the conditions for feasibility.
\begin{thm}\label{thm3}
    Given a point $\bm x \in \mathbb{R}^n$ and a HOCBF design with the coefficient vector $\bm \gamma \in \mathbb{R}^r$, the SOCP (\ref{eq25}) is feasible, if and only if there exists a control input $\bm u \in \mathbb{R}^m$ such that it satisfies the following conditions
        \begin{align}
        % $$\begin{equation}
               & \begin{bmatrix}
                \bm{\hat{\zeta}}_f + \bm \mu_r & \bm{c}(\bm x)
                \end{bmatrix}
                \begin{bmatrix}
                    \bm \gamma\\ \bm u
                \end{bmatrix}
                \geq 0,\label{eq29}\\
            &\begin{bmatrix}
                \bm \gamma^T & \bm u^T
            \end{bmatrix} S(\bm x) \begin{bmatrix}
                \bm \gamma \\ \bm u
            \end{bmatrix} \leq 0,
            \label{eq30}
        \end{align}
        where $S \in \mathbb{R}^{(m+r) \times (m+r)}$ is of the form
        \begin{align}
            S(\bm x) &= \begin{bmatrix}
            S_1(\bm x) & S_2(\bm x) \\
            S_2^T(\bm x) & S_3(\bm x)
            \end{bmatrix},\nonumber\\
            S_1(\bm x) &= \beta^2 {L^r}^T L^r-(\bm{\hat{\zeta}}_f+\bm \mu_r)^T(\bm{\hat{\zeta}}_f+\bm \mu_r),\nonumber\\
            S_2(\bm x) &= \beta {L^r}^T A(\bm x)-(\bm{\hat{\zeta}}_f+\bm \mu_r)^T \bm c(\bm x),\nonumber\\
            S_3(\bm x) &= A(\bm x)^T A(\bm x)-\bm c(\bm x)^T \bm c(\bm x).
            \label{eq31}
        \end{align}
\end{thm}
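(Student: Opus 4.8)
The plan is to reduce feasibility of the full SOCP (\ref{eq26}) to feasibility of the single second-order cone constraint (\ref{eq28}) arising from the safety certificate, and then to characterize exactly when that SOC constraint admits a solution $\bm u$. First I would note that the constraint $\|M_1 \bm \omega + \bm n_1\|_2 \le \bm p_1^T\bm \omega$ from (\ref{eq27}) is satisfiable for \emph{every} $\bm u$: choosing the auxiliary variable $t = \|\bm u - \bm u_{nom}\|_2$ makes it hold with equality. Consequently $\bm \omega = [\bm u^T\ t]^T$ is feasible for (\ref{eq26}) if and only if $\bm u$ satisfies (\ref{eq28}), so the SOCP (\ref{eq25}) is feasible if and only if there exists $\bm u \in \mathbb{R}^m$ with $\|A(\bm x)\bm u + \bm b(\bm x)\|_2 \le \bm c(\bm x)\bm u + d(\bm x)$, where $A, \bm b, \bm c, d$ are the quantities built in the proof of Theorem \ref{thm2}.

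Next I would use the elementary fact that, for fixed data, $\|A\bm u + \bm b\|_2 \le \bm c\bm u + d$ holds if and only if \emph{both} $\bm c\bm u + d \ge 0$ \emph{and} $\|A\bm u + \bm b\|_2^2 \le (\bm c\bm u + d)^2$. The forward implication is immediate because the left-hand side is nonnegative, which forces $\bm c\bm u + d \ge 0$, after which squaring preserves the inequality; for the converse, nonnegativity of $\bm c\bm u + d$ lets one take square roots of $\|A\bm u + \bm b\|_2^2 \le (\bm c\bm u + d)^2$ and recover the cone constraint. This decouples the feasibility condition into one linear and one quadratic inequality in $\bm u$, with $\bm \gamma$ fixed by the HOCBF design.

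The remaining work is bookkeeping. Substituting $A(\bm x) = \beta L^m$, $\bm b(\bm x) = \beta L^r\bm \gamma$, $\bm c(\bm x) = \bm{\hat{\zeta}}_g + \bm \mu_m$ and $d(\bm x) = (\bm{\hat{\zeta}}_f + \bm \mu_r)\bm \gamma$, the linear inequality $\bm c\bm u + d \ge 0$ becomes $(\bm{\hat{\zeta}}_f + \bm \mu_r)\bm \gamma + \bm c(\bm x)\bm u \ge 0$, which is precisely (\ref{eq29}) once written as a block product against $[\bm \gamma^T\ \bm u^T]^T$. For the quadratic inequality I would expand $\|A\bm u + \bm b\|_2^2 - (\bm c\bm u + d)^2 = \bm u^T(A^TA - \bm c^T\bm c)\bm u + 2(\bm b^TA - d\,\bm c)\bm u + (\bm b^T\bm b - d^2)$, then pull the fixed vector $\bm \gamma$ out of $\bm b = \beta L^r\bm \gamma$ and $d = (\bm{\hat{\zeta}}_f + \bm \mu_r)\bm \gamma$ so that every coefficient becomes a quadratic form in the stacked vector $[\bm \gamma^T\ \bm u^T]^T$; matching terms identifies the $\bm \gamma$-block with $S_1(\bm x)$, the cross block with $S_2(\bm x)$, and the $\bm u$-block with $S_3(\bm x)$ as in (\ref{eq31}), so the quadratic inequality is exactly (\ref{eq30}). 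Combining this with the reduction of the first paragraph yields the claimed necessary and sufficient condition.

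I do not expect a real obstacle; the only care needed is in the block identification — in particular verifying that the off-diagonal block carries the factor $\beta^2{L^r}^TL^m$ (one $\beta$ from $\bm b$ and one from $A$) and that the scalars $d$ and $\bm c\bm u$ are grouped consistently when the expansion is rewritten as $[\bm \gamma^T\ \bm u^T]\,S(\bm x)\,[\bm \gamma^T\ \bm u^T]^T$. It is also worth remarking that, since $\bm \gamma$ is fixed, (\ref{eq30}) is a generally nonconvex quadratic constraint on $\bm u$ alone, so feasibility amounts to the intersection of the halfspace (\ref{eq29}) with that quadratic region being nonempty.
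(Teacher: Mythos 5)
Your proposal is correct and follows essentially the same route as the paper's proof: observe that the left-hand side of the SOC constraint (\ref{eq28}) is nonnegative, so the right-hand side must be nonnegative (giving (\ref{eq29})), then square both sides and regroup the terms into the quadratic form $S(\bm x)$ (giving (\ref{eq30})). You are merely more explicit than the paper about the reduction of the full SOCP to the single safety constraint and about the converse (squared) direction, and your remark that $\beta L^{r\,T} A(\bm x)=\beta^2 L^{r\,T}L^m$ is consistent with the stated $S_2$.
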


\begin{proof}
    In order to analyze the feasibility of the SOCP (\ref{eq25}), we need to verify that the safety SOC constraint (\ref{eq28}) is feasible. Since the left hand side is always non-negative, the right hand side must be also non-negative, which leads to the first condition.

    Now, since both sides of (\ref{eq28}) are non-negative, we can square both sides of the equation and then gather all the terms on the left-hand side, Factorizing the similar terms gives the second condition.
\end{proof}
Based on this result, we obtain the necessary condition for point-wise feasibility as stated in the following:
\begin{lem}\label{lem2}
    Given $\bm x \in \mathbb{R}^n$, if the SOCP (\ref{eq25}) is feasible, then the following condition must be satisfied
    \begin{equation}
        1 - \frac{1}{\beta^2} \bm{\phi} \Sigma^{-1} \bm{\phi}^T \leq 0,
    \end{equation}
    where $\bm{\phi} = \begin{bmatrix}
    \bm{\hat{\zeta}}_f + \bm{\mu}_r & \bm{c}(\bm{x})\end{bmatrix} \in \mathbb{R}^{1 \times (m+r)}$.
\end{lem}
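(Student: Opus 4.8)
The plan is to feed the feasibility characterization of Theorem~\ref{thm3} into a single quadratic inequality and then close with a generalized Cauchy--Schwarz estimate. Write $\bm y = \begin{bmatrix}\bm\gamma^T & \bm u^T\end{bmatrix}^T \in \mathbb{R}^{m+r}$ and $\bm\phi = \begin{bmatrix}\bm{\hat{\zeta}}_f + \bm{\mu}_r & \bm c(\bm x)\end{bmatrix}$, so that (\ref{eq29}) reads $\bm\phi\bm y \geq 0$ and (\ref{eq30}) reads $\bm y^T S(\bm x)\bm y \leq 0$; only the latter will be needed. The first step is the algebraic identity
\[
S(\bm x) = \beta^2 \Sigma(\bm x) - \bm\phi^T \bm\phi,
\]
which I would verify block by block: writing $\Sigma = L^T L$ with $L = \begin{bmatrix}L^r & L^m\end{bmatrix}$ and recalling $A(\bm x) = \beta L^m$ from the proof of Theorem~\ref{thm2}, the three blocks in (\ref{eq31}) are exactly $\beta^2$ times the blocks ${L^r}^T L^r,\ {L^r}^T L^m,\ {L^m}^T L^m$ of $L^T L$ minus the blocks $(\bm{\hat{\zeta}}_f+\bm{\mu}_r)^T(\bm{\hat{\zeta}}_f+\bm{\mu}_r),\ (\bm{\hat{\zeta}}_f+\bm{\mu}_r)^T\bm c,\ \bm c^T\bm c$ of $\bm\phi^T\bm\phi$. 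Consequently (\ref{eq30}) is equivalent to $\beta^2\,\bm y^T\Sigma(\bm x)\bm y \leq (\bm\phi\bm y)^2$.

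Next I would note that $\bm y \neq \bm 0$, since the coefficient vector satisfies $\gamma_r = 1$ by construction and hence $\bm\gamma \neq \bm 0$. Because $\Sigma(\bm x)$ is positive definite (it is built from the valid kernel $k_c$ with $\sigma_n > 0$, as used in the proof of Theorem~\ref{thm2}), we get $\bm y^T\Sigma(\bm x)\bm y > 0$ and $\Sigma(\bm x)^{-1}$ exists. A generalized Cauchy--Schwarz inequality then gives
\[
(\bm\phi\bm y)^2 = \big(\bm\phi\,\Sigma^{-1/2}\cdot\Sigma^{1/2}\bm y\big)^2 \leq \big(\bm\phi\,\Sigma^{-1}\bm\phi^T\big)\big(\bm y^T\Sigma\,\bm y\big).
\]
Chaining this with the previous inequality, $\beta^2\,\bm y^T\Sigma\bm y \leq (\bm\phi\bm y)^2 \leq (\bm\phi\Sigma^{-1}\bm\phi^T)(\bm y^T\Sigma\bm y)$, and cancelling the strictly positive factor $\bm y^T\Sigma\bm y$ yields $\beta^2 \leq \bm\phi\Sigma^{-1}\bm\phi^T$. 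Dividing by $\beta^2 > 0$ rearranges to $1 - \frac{1}{\beta^2}\bm\phi\Sigma^{-1}\bm\phi^T \leq 0$, the claimed condition.

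The only delicate part is the first step: confirming the block identity $S = \beta^2\Sigma - \bm\phi^T\bm\phi$, in particular matching the partitioning conventions ($L^r$ the first $r$ columns of $L$, $L^m$ the last $m$, and likewise for $\bm\mu$) with the definitions of $A(\bm x)$, $\bm c(\bm x)$, and $d(\bm x)$ in the proof of Theorem~\ref{thm2}. Once that bookkeeping is done the remainder is a one-line Cauchy--Schwarz estimate; note that since only \emph{necessity} is claimed, I never need to construct a feasible $\bm u$ -- I simply manipulate the one provided by Theorem~\ref{thm3}, so the interaction between the fixed $\bm\gamma$-block and the free $\bm u$-block of $\bm y$ never has to be untangled.
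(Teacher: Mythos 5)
Your proof is correct, and it takes a genuinely different route from the paper's. Both arguments rest on the same algebraic identity $S(\bm x) = \beta^2 \Sigma(\bm x) - \bm\phi^T\bm\phi$ (in the paper this appears as the Schur complement $M/1 = S(\bm x)$ of the bordered matrix $M = \begin{bmatrix} 1 & \bm\phi \\ \bm\phi^T & \beta^2\Sigma \end{bmatrix}$, together with $\beta^2\Sigma = \begin{bmatrix}\beta L^r & A\end{bmatrix}^T\begin{bmatrix}\beta L^r & A\end{bmatrix}$), but the paper then argues by contradiction: assuming $1 - \tfrac{1}{\beta^2}\bm\phi\Sigma^{-1}\bm\phi^T > 0$, the Schur complement theorem forces $M \succ 0$, hence $S(\bm x) \succ 0$, contradicting the existence of $\bm y$ with $\bm y^T S \bm y \leq 0$ from condition (\ref{eq30}) of Theorem \ref{thm3}. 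You instead go forward: from (\ref{eq30}) you get $\beta^2 \bm y^T\Sigma\bm y \leq (\bm\phi\bm y)^2$, then a single Cauchy--Schwarz estimate $(\bm\phi\bm y)^2 \leq (\bm\phi\Sigma^{-1}\bm\phi^T)(\bm y^T\Sigma\bm y)$ and cancellation of the strictly positive factor $\bm y^T\Sigma\bm y$ give the conclusion directly. Your route is more elementary (no bordered matrix, no contradiction, no appeal to the Schur complement characterization of definiteness), and it has the small additional merit of making explicit why $\bm y \neq \bm 0$ (via $\gamma_r = 1$) and hence why the cancellation is legitimate --- a point the paper's argument also needs (positive definiteness of $S$ only contradicts (\ref{eq30}) for nonzero $\bm y$) but leaves implicit. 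Both proofs rely equally on $\Sigma(\bm x) \succ 0$ and on the block bookkeeping $A = \beta L^m$, $\Sigma = L^T L$ from the proof of Theorem \ref{thm2}, which you correctly flag as the only place where care is required; your verification of that identity matches the paper's partitioning conventions, so there is no gap.
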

\begin{proof}
We prove this by contradiction. Let's assume that there exists a solution $\bm u \in \mathbb{R}^m$ of the SOCP (\ref{eq25}), which satisfies $1 - \frac{1}{\beta^2} \bm{\phi} \Sigma^{-1} \bm{\phi}^T > 0$. 
Let's define the matrix $M$ partitioned as
\begin{equation*}
    M = \begin{bmatrix}
        1 & \bm{\phi} \\
        \bm{\phi}^T & \beta^2 \Sigma(\bm{x})
    \end{bmatrix} \in \mathbb{R}^{(m+r+1) \times (m+r+1)}.
\end{equation*}
The Schur complements of $M$ is obtained by
\begin{equation*}
    \left\{
    \begin{array}{l}
        M / 1 = \beta^2 \Sigma(\bm{x}) - \bm{\phi}^T \bm{\phi}, \\
        M / \beta^2 \Sigma(\bm{x}) = 1 - \frac{1}{\beta^2} \bm{\phi} \Sigma^{-1} \bm{\phi}^T.
    \end{array}
    \right.
\end{equation*}
We can easily verify that $\beta^2 \Sigma=\begin{bmatrix}\beta L^r & A(\bm x)\end{bmatrix}^T\begin{bmatrix}\beta L^r & A(\bm x)\end{bmatrix}$ and $M / 1=S(\bm x)$.
We know that $\Sigma(\bm x)$ is symmetric positive-definite (SPD). So, $M$ is also symmetric and we can characterize its definiteness by the Schur complement theorem. 
Since its lower right block is positive definite and its corresponding Schur complement $M/\beta^2 \Sigma(\bm x)>0$ by our assumption, we can conclude that $M$ is also positive definite.

Now, given that $M$ is SPD, and its upper left block is always positive definite, we can conclude that $M/1 = S(\bm x)$ must be positive definite, which is a contradiction to the condition (\ref{eq30}). Thus, $1 - \frac{1}{\beta^2} \bm{\phi} \Sigma^{-1} \bm{\phi}^T \leq 0$ holds.
\end{proof}
Theorem $\ref{thm3}$ also provides the theoretical background to obtain a sufficient condition of feasibility.
\begin{lem}\label{lem4}
    Given $\bm x \in \mathbb{R}^n$, the SOCP (\ref{eq25}) is feasible at $\bm x$, if $S_3(\bm x)$ is negative definite.
\end{lem}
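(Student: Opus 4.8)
The plan is to use the characterization of feasibility established in Theorem \ref{thm3}: the SOCP (\ref{eq25}) is feasible at $\bm x$ exactly when some $\bm u \in \mathbb{R}^m$ satisfies both (\ref{eq29}) and (\ref{eq30}). So the strategy is to construct such a $\bm u$ explicitly by walking far enough along a suitable ray in $\mathbb{R}^m$.

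First I would rewrite the two conditions as conditions on $\bm u$ alone. Using the block form of $S$ in (\ref{eq31}),
\[
\begin{bmatrix}\bm\gamma^T & \bm u^T\end{bmatrix} S(\bm x)\begin{bmatrix}\bm\gamma\\\bm u\end{bmatrix} = \bm u^T S_3(\bm x)\,\bm u + 2\,\bm\gamma^T S_2(\bm x)\,\bm u + \bm\gamma^T S_1(\bm x)\,\bm\gamma =: Q(\bm u),
\]
which is a quadratic in $\bm u$ with Hessian $2S_3(\bm x)$, while the left-hand side of (\ref{eq29}) is the affine map $\ell(\bm u) := d(\bm x) + \bm c(\bm x)\,\bm u$ (with $A,\bm b,\bm c,d$ as in the proof of Theorem \ref{thm2}; this $\ell$ is precisely the right-hand side of the SOC constraint (\ref{eq28})). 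Hence it suffices to exhibit $\bm u$ with $Q(\bm u)\le 0$ and $\ell(\bm u)\ge 0$.

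Next I would observe that the hypothesis $S_3(\bm x)\prec 0$ forces $\bm c(\bm x)\neq 0$: since $S_3 = A(\bm x)^T A(\bm x) - \bm c(\bm x)^T\bm c(\bm x)$ and $A^T A \succeq 0$, we get $\bm c(\bm x)^T\bm c(\bm x) = A^T A - S_3 \succ 0$, so $\bm c(\bm x)\neq 0$. Take the ray $\bm u = t\,\bm c(\bm x)^T$ with $t>0$. Then $\ell(t\,\bm c^T) = d(\bm x) + t\,\|\bm c(\bm x)\|_2^2 \to +\infty$, so $\ell\ge 0$ for all sufficiently large $t$; and $Q(t\,\bm c^T) = t^2\,\bm c(\bm x)S_3(\bm x)\bm c(\bm x)^T + O(t)$, whose leading coefficient $\bm c(\bm x)S_3(\bm x)\bm c(\bm x)^T$ is strictly negative because $S_3\prec 0$ and $\bm c(\bm x)^T\neq 0$, so $Q(t\,\bm c^T)\to -\infty$ and in particular $Q\le 0$ for large $t$. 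Choosing $t$ large enough to satisfy both inequalities, $\bm u = t\,\bm c(\bm x)^T$ verifies (\ref{eq29})--(\ref{eq30}), and Theorem \ref{thm3} then gives feasibility of (\ref{eq25}).

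The argument is essentially self-contained; the only point worth checking carefully is that $S_3\prec 0$ really does imply $\bm c(\bm x)\neq 0$ (so the chosen ray drives $\ell$ to $+\infty$ rather than $-\infty$) and that the coefficient of $t^2$ in $Q(t\,\bm c^T)$ is exactly the negative number $\bm c(\bm x)S_3(\bm x)\bm c(\bm x)^T$ --- after that it is just the standard fact that a quadratic with negative-definite leading term is anti-coercive along every ray, hence eventually nonpositive. An alternative to the ray construction would be to argue directly that $(\bm c\bm u+d)^2-\|A\bm u+\bm b\|_2^2\to+\infty$ while $\bm c\bm u+d$ stays nonnegative, but the ray version is the cleanest.
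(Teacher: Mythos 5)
Your proof is correct and takes essentially the same route as the paper: it invokes Theorem \ref{thm3} and exhibits a feasible control by scaling a fixed direction so that the quadratic condition (\ref{eq30}) is eventually driven negative by a strictly negative $t^2$-coefficient while the affine condition (\ref{eq29}) tends to $+\infty$. The only difference is the choice of ray --- you take $\bm u = t\,\bm c(\bm x)^T$ after observing that $S_3(\bm x)$ negative definite forces $\bm c(\bm x) \neq 0$, whereas the paper walks along the top eigenvector $\bm e_m$ of $S_3(\bm x)$ with the sign $\sgn(\bm c(\bm x)\bm e_m)$ chosen so the affine term grows; both arguments are valid.
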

\begin{proof}
    From (\ref{eq31}), we know that $S_3(\bm x)$ is a symmetric matrix, thus it has real eigenvalues. Let's denote its maximum eigenvalue and the corresponding eigenvector by $\lambda_{m}$, $\bm{e}_{m}$. Then, by the assumption that $S_3(\bm x)$ is negative definite, we can conclude that $\lambda_{m}<0$ and $\bm{e}_{m}^T S_3(\bm x) \bm{e}_{m}<0$. Substituting (\ref{eq31}) into the former, we have
    \begin{equation*}
        \bm{e}_{m}^T A^T(\bm x) A(\bm x) \bm{e}_{m} - (\bm c(\bm x) \bm{e}_{m})^T(\bm c(\bm x)\bm{e}_{m})<0.
    \end{equation*}
    Since $A(\bm x)^TA(\bm x)$ is positive definite, $\bm c(\bm x) \bm{e}_{m} \neq 0$ must hold. We take a control input in the direction of this eigenvector as $\bm u_e = \alpha \sgn(\bm{c}(\bm x) \bm{e}_{m})\cdot\bm{e}_{m}, \alpha>0$. Next, we need to check if the resulting control input can satisfy the conditions of Theorem \ref{thm3}. Substituting $\bm{u}_e$ in (\ref{eq30}), we have
    \begin{equation*}
        \bm \gamma^T S_1 \bm \gamma + 2\alpha \gamma^T S_2 ((\bm{c}(\bm x) \bm{e}_{m})\cdot\bm{e}_{m}) + \alpha^2 \bm e_{m}^T S_3(\bm x) \bm e_{m}.
    \end{equation*}
    By choosing large enough $\alpha$, the above statement can be made negative since $\bm e_m^T S_3(\bm x) \bm e_m <0$.
    Also, substituting $\bm u_e$ into (\ref{eq29}), we have
    \begin{equation*}
        \bm c(\bm x)  \bm u_e + (\bm{\hat{\zeta}}_f + \bm \mu_r) \bm \gamma = \alpha \lvert \bm c(\bm x) \bm e_m \rvert + (\bm{\hat{\zeta}}_f + \bm \mu_r) \bm \gamma.
    \end{equation*}
    Again, by choosing a large enough $\alpha$, we can make the above expression positive, which concludes the proof.
\end{proof}

\begin{figure}[t]
    \centering
    \includegraphics[width=8.5cm]{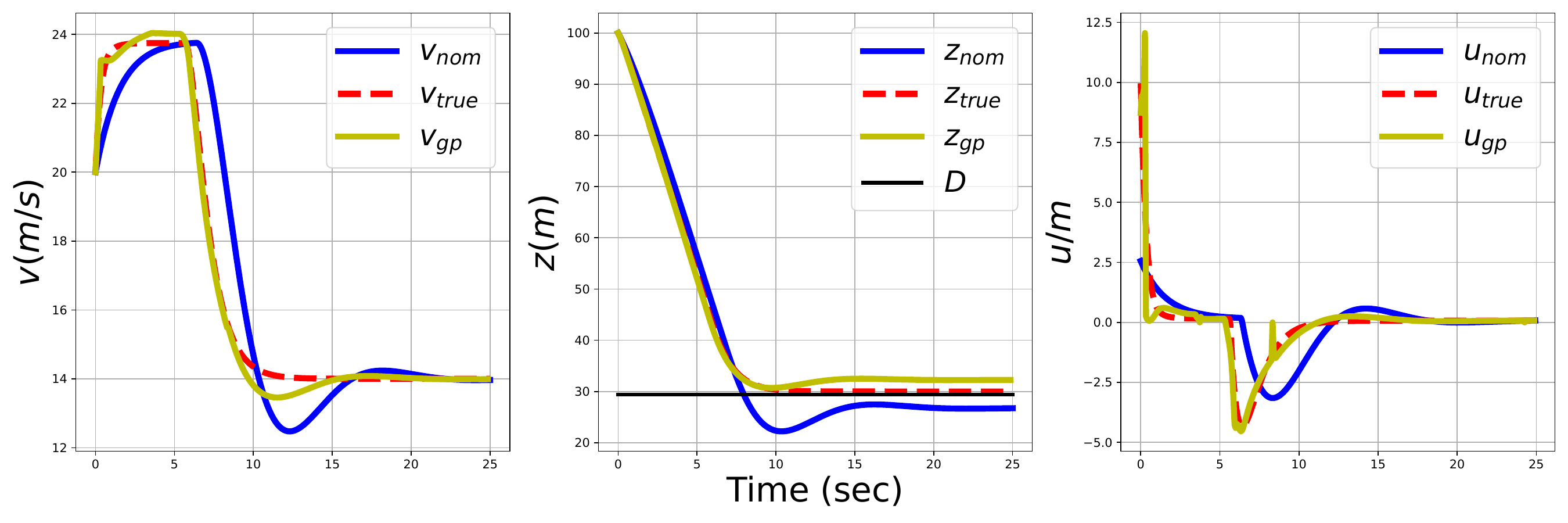}
    \caption{The state variables and control input in the presence of uncertainty when applying the GP-based SOCP-HOCBF (yellow) compared to the nominal QP-HOCBF (blue) and the QP-HOCBF design based on the true system (red dashed).}
    \label{fig1}
\end{figure}
\section{Simulation results}
In this section, two simulation studies highlight the effectiveness of the proposed method. Simulations are run in the Python environment, and the optimization problems are solved using the CVXPY \cite{Diamond2016} package.

\subsection{Adaptive cruise control with collision avoidance}
Consider an adaptive cruise control (ACC) modeled by
\begin{equation}
    \dot{\bm{x}} = \begin{bmatrix}
        -\frac{f_0 + f_1 v + f_2 v^2}{m}\\ v_0 - v
    \end{bmatrix} + \begin{bmatrix}
        \frac{1}{m} \\ 0
    \end{bmatrix} u,
    \label{eq33}
\end{equation}
where the system state $\bm x = \begin{bmatrix} v & z\end{bmatrix}^T \in \mathbb{R}^2$  containing the ego car's forward velocity $v$, and the distance between the ego car and the preceding car, denoted by $z$. The wheel force $u \in \mathbb{R}$ is the control input of the ego vehicle. The model assumes a constant velocity $v_0$ for the preceding car; however, there exists uncertainty in the estimation of $v_0$. The mass of the ego car is denoted by $m$ and it experiences rolling resistance modeled as $f_0+f_1 v+ f_2 v^2$.

The control objective is to reach a target speed $v_d$ while maintaining a safe distance $D$ to the vehicle in front. We are provided with a nominal model characterized with $m = 825\hspace{2pt}kg$, and $f_0 = 0.1, f_1 = 5, f_2 = 0.25$, and $v_0 = 16\hspace{2pt}m/s$. The true system parameters, which are unknown, are  $m = 3300\hspace{2pt}kg$, $f_0 = 0.2, f_1 = 10, f_2 = 0.5$, and $v_0 = 14\hspace{2pt}m/s$.

We consider the Lyapunov function $V(\bm x) = (v - v_d)^2$, and use GP-based control Lyapunov function approach \cite{castaneda2021gaussian} to stabilize the system near the desired velocity $v_d = 24\hspace{2pt}m/s$. GP-based CLF can be converted to a SOC constraint and incorporated to the $(\ref{eq25})$ as a soft constraint. We have $h(\bm x) = z - D$, where $D = 30\hspace{2pt}m$, as the CBF of the system, which has relative degree two, with respect to the system (\ref{eq33}). The vector of coefficients is $\bm \gamma = \begin{bmatrix}4 & 3.75\end{bmatrix}^T$ in the HOCBF design.
Figure \ref{fig1} illustrates the performance of the proposed method in comparison to the nominal safety-critical controller. It also includes the result of the oracle HOCBF-QP that designed by the true dynamics as a reference. 

The simulation is started from the initial point $\bm{x}_0 = \begin{bmatrix} 20, 100 \end{bmatrix}^T$. First, the controller increases the velocity to converge to the desired speed, which decreases $z$. The distance between the ego car and the preceding car reaches the limit after $7$ seconds of the simulation. Due to the adverse effect of uncertainty, the nominal safety controller violates the safe distance. However, the GP-based controller could avoid unsafe behavior. It can be verified that the proposed method could recover the true system's performance.

We used the episodic data collection method. For the first episode, we run the system using nominal QP-HOCBF design and collect the system trajectory, until it becomes unsafe. Then, we use the proposed GP model with a high probability bound ($1 - \delta = 0.95$) and apply the SOCP controller, which contains an uncertainty-aware safety constraint. We add the collected data to the dataset at each episode, until the system passes the simulation time without any unsafe interruption. A total number of $119$ samples are collected for this simulation. 
We used infinitely differentiable squared exponential kernel \cite{williams2006gaussian}, for $k_i, i=1,2, 3$ individual kernels in the structure of $k_c$.

\subsection{Active suspension system}
Consider two degrees of freedom, active suspension system of a quarter car (see Figure \ref{figasc} for an illustration), modeled in the form of $\dot{\bm x} = f(\bm x) + g(\bm x) u + g_d(\bm x) d$, as follows:
\begin{equation*}
    \dot{\bm x}= \begin{bmatrix}
        x_3 \\ x_4 \\ \frac{k_1(x_2 - x_1) + b\hspace{0.5pt}(x_4 - x_3)}{m_1}\\ \frac{k_1 (x_1-x_2) -k_2 x_2+ b\hspace{0.5pt}(x_3 - x_4)}{m_2}
    \end{bmatrix} + \begin{bmatrix}
        0 \\ 0 \\ \frac{1}{m_1} \\ -\frac{1}{m_2}
    \end{bmatrix} u + \begin{bmatrix}
        0 \\ 0 \\0 \\ \frac{k_2}{m_2}
    \end{bmatrix}d.
    % \label{eq34}
\end{equation*}

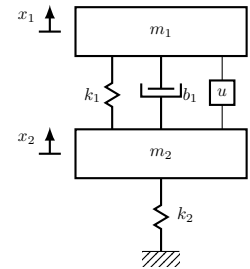
\begin{figure}[htbp]
    \centering
    \begin{tikzpicture}[every node/.style={draw,outer sep=0pt,thick}, scale=0.65, transform shape]
\tikzset{spring/.style={thick,decorate,decoration={zigzag,pre length=0.3cm,post length=0.3cm,segment length=6}},
damper/.style={thick,decoration={markings,  
  mark connection node=dmp,
  mark=at position 0.5 with 
  {
    \node (dmp) [thick,inner sep=0pt,transform shape,rotate=-90,minimum width=15pt,minimum height=3pt,draw=none] {};
    \draw [thick] ($(dmp.north east)+(2pt,0)$) -- (dmp.south east) -- (dmp.south west) -- ($(dmp.north west)+(2pt,0)$);
    \draw [thick] ($(dmp.north)+(0,-5pt)$) -- ($(dmp.north)+(0,5pt)$);
  }
}, decorate},
ground/.style={fill,pattern=north east lines,draw=none,minimum width=0.75cm,minimum height=0.3cm}}

% Bottom mass
\node (M1) [minimum width=3.5cm,minimum height=1cm] {$m_2$};

\draw[thick] (M1.west) ++(-0.3,0cm) -- +(-0.45,0cm);
\draw [-latex, thick] (M1.west) ++(-0.525,0cm) -- +(0,0.55cm);
\node (x_22) at ($(M1.west)!(M1.west)!(M1.west) + (-1cm,0.3cm)$) [draw=none] {$x_{2}$};

\node (ground2) at (M1.south) [ground,yshift=-1.5cm,anchor=north] {};
\draw (ground2.north west) -- (ground2.north east);
\draw [spring] (ground2.north) -- ($(M1.south east)!(ground2.north)!(M1.south west)$);
\node (k_2) at ($(M1.north)!(M1.north)!(M1.south) + (0.5cm,-1.75cm)$) [draw=none] {$k_{2}$};

\draw [damper] ($(M1.north)!(M1.north)!(M1.south) + (0,0cm)$) -- ($(M1.north)!(M1.north)!(M1.south) + (0,1.5cm)$);
\draw [spring] ($(M1.north)!(M1.north)!(M1.south) + (-1,0cm)$) -- ($(M1.north)!(M1.north)!(M1.south) + (-1,1.5cm)$);

\node (b_1) at ($(M1.north)!(M1.north)!(M1.south) + (0.6cm,0.7cm)$) [draw=none] {$b_{1}$};
\node (k_1) at ($(M1.north)!(M1.north)!(M1.south) + (-1.4cm,0.7cm)$) [draw=none] {$k_{1}$};

% Top mass
\node (M2) at ($(M1.north)!(M1.north)!(M1.south) + (0,2cm)$) [minimum width=3.5cm,minimum height=1cm] {$m_1$};

\node (U) at ($(M1.north)!(M1.north)!(M1.south) + (1.25,0.75cm)$) [minimum width=.5cm,minimum height=0.5cm] {$u$};
\draw ($(M1.north) +(1.25,0cm)$) -- (U.south);
\draw ($(U.north) +(0,0cm)$) -- ($(M2.south) +(1.25,0cm)$);

\draw[thick] (M2.west) ++(-0.3,0cm) -- +(-0.45,0cm);
\draw [-latex, thick] (M2.west) ++(-0.525,0cm) -- +(0,0.55cm);
\node (x_22) at ($(M2.west)!(M2.west)!(M2.west) + (-1cm,0.3cm)$) [draw=none] {$x_{1}$};

% \draw [-latex,ultra thick] (M2.north) ++(0,0.2cm) -- +(0,1cm);

\end{tikzpicture}
    \caption{Active suspension system.}
    \label{figasc}
\end{figure}

The state vector denoted by  $\bm x = \begin{bmatrix} x_1 & x_2 & x_3 & x_4\end{bmatrix}^T \in \mathbb{R}^4$, where $x_1$ and $x_2$ are the vertical displacements of the body and wheel from their equilibrium position, respectively. $x_3$ and $x_4$ represent the vertical velocity of the body and wheel, respectively, indicating how fast the body and wheel are moving vertically. The control input $u \in \mathbb{R}$ is the force applied to the system, enabling the vehicle to respond to external forces such as road irregularities, which are modeled by $d \in \mathbb{R}$. By controlling the forces applied to the suspension components, the system can achieve improved ride comfort and handling characteristics. Therefore, the control objective is regulating $x_1$ and $x_2$ to zero displacement. However, large displacements for $x_1$ must be prohibited to avoid excessive motion that could impact the safety of the passengers.

Safety in this problem is defined as avoiding vertical displacements larger than a threshold $D$. We consider $D = 6 \hspace{2pt}cm$, and the CBF is $h(\bm x) = D - x_1$, which has relative degree two, with respect to the active suspension system. The nominal model parameters are $m_1 = 300\hspace{2pt}kg, m_1 = 60\hspace{2pt}kg, k_1 = 16\hspace{2pt}kN/m, k_2 = 190\hspace{2pt}kN/m, b = 1\hspace{2pt}kN.s/m$. The true model is obtained by $m_1 = 675\hspace{2pt}kg, m_1 = 135\hspace{2pt}kg, k_1 = 36\hspace{2pt}kN/m, k_2 = 427.5\hspace{2pt}kN/m, b = 2.25\hspace{2pt}kN.s/m$.
We use linear quadratic regulators (LQR) as the stabilization controller with weighting matrices $Q = 10I$ and $R = 1$, and the vector of coefficients in the HOCBF design is $\bm \gamma = \begin{bmatrix} 41 & 395\end{bmatrix}^T$. The system is started from its equilibrium point $\bm{x}_0 = \begin{bmatrix} 0 & 0 & 0 &0 \end{bmatrix}^T$.

We used the same data collecting approach, which collected $174$ data points in total. The same kernel structure as the previous example is selected. Figure \ref{fig3} shows the states $x_1$ and $x_3$ and the control input, together with the road disturbance profile. It highlights that the nominal QP-HOCBF controller violates the threshold, while the proposed GP-based SOCP controller successfully recovers the performance of the true system. 

\begin{figure}[t]
    \centering
    \includegraphics[width=8.5cm]{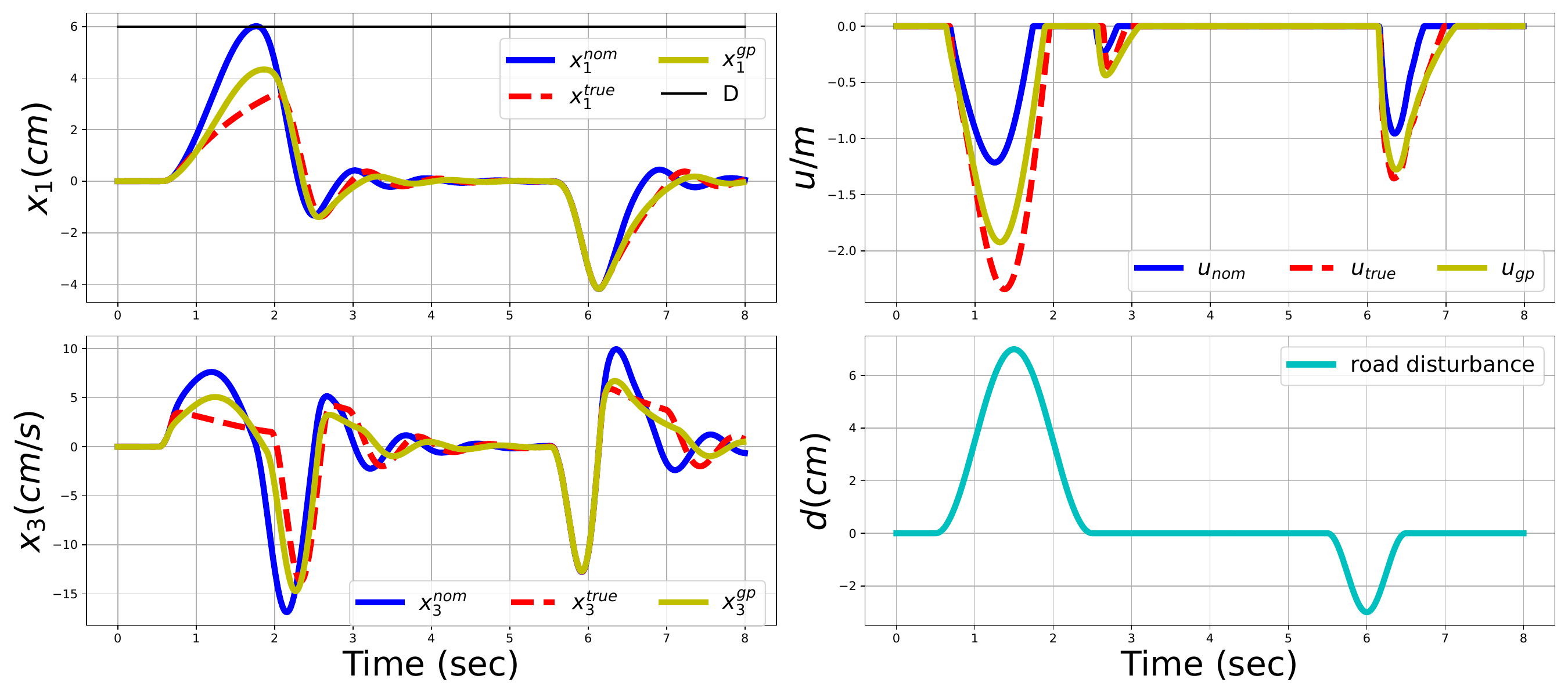}
    \caption{Comparison of the GP-based SOCP-HOCBF (yellow), nominal QP-HOCBF (blue), and true QP-HOCBF (dashed red). States $x_1$ and $x_3$ of the active suspension system (left column). Control input and the road disturbance profile (right column).}
    \label{fig3}
\end{figure}

\section{Conclusion}

In this paper, A GP framework is developed to approximate the effect of model uncertainty on the high-order safety certificate. By choosing linear class $\mathcal{K}$ functions in HOCBF design, the residual term in the high-order safety certificate is fully characterized. Then, the resulting constrained optimization problem with the uncertainty-aware chance constraint is converted into an SOCP which can be solved in real-time. Finally, the feasibility of the SOCP is addressed.
Future works will focus on eliminating the assumption of global relative degree for the true system.
\bibliographystyle{plain}
\bibliography{References}
\end{document}